\newtheorem{theorem}{Theorem}
\newtheorem*{theorem*}{Theorem}
\newtheorem{Lemma}[theorem]{Lemma}
\newtheorem{corollary}[theorem]{Corollary}
\newtheorem{example}[theorem]{Example}
\begin{document}
		\title[Signatures of Lattice Excitations in Quantum Channels]{Signatures of Lattice Excitations in Quantum Channels: Limit of Parent Hamiltonians}
		\author{Beno\^it Descamps $^{1,2}$}
		
		\address{$^{1}$Vienna Center for Quantum Technology, University of Vienna, Boltzmanngasse 5, 1090 Wien, Austria}
		\address{$^{2}$Ghent University, Krijgslaan 281, 9000 Gent, Belgium}

	\date{\today}
	\begin{abstract}
We prove that every injective Matrix Product State is the unique ground state of a simple hopping theory. We start by studying the low energy spectrum of parent Hamiltonians of injective Matrix Product States in a particular long range and system size limit under the validity of an asymptotic regime with low particle density. We show that in this limit a natural first quantization arises. This allows us to compute a specific type of low energy spectrum. This spectrum depends solely on the properties of a quantum channel, i.e. transfer matrix of the ground state, and not on any other details of the ground-state.
We also review normal quantum channels for which the expression is more simplified. 

The construction possibly has some interesting uses for the study of quantum and classical Markov processes which we briefly expose.
As an application, we revisit the notion of (many-body)-localization with our framework. Our calculations revealed that translational invariant Matrix Product States can be interpreted as a stationary sea of particles. As a next step rather than starting from some local Hamiltonian with random potentials, we consider fluctuations of the local tensors of a continuous one-parameter family of Matrix Product States. Localization in 1-dimension, is then understood from a simple study of spectral and mixing properties of finite dimensional quantum channels. \\

	\end{abstract}
	
	\pacs{}
	\maketitle

\section{Introduction}
The general study of strongly interacting quantum many body systems requires a large amount of parameters. It appears, however, that a successful description of such systems is still possible in a much smaller manifold. Matrix product
states (MPS), originating from the density matrix re-normalization group
(DMRG) algorithm \cite{chapter 4: Verstraete Cirac,chapter 4: StevenWhite} , have turned out to be one of these particular manifolds used to describe the ground state of these systems.
Further study of the manifold laid down some ground work for studying lower energy properties \cite{chapter 4: Haegeman QGross,chapter 4: Haegeman PostMPS}.
Some argumentation can be made for the mathematical and physical validity of the methods \cite{chapter 4: Haegeman ElementExc}. However, we argue at the beginning of this work that this method is heavily biased by the spectral properties of the transfer matrix of the ground state. 
While every numerical method has its own bias, this opens the question as to which condensed matter properties hide inside Quantum Channels, i.e. transfer matrix, or trace preserving completely positive operators.

Recalling  K\"all\'en -Lehmann spectral representation in quantum field theory \cite{chapter 4: Kaller, chapter 4: Lehmann}, which connects 2-point correlations with free propagators, it is not completely unexpected that indeed the transfer matrix contains some information.

Apart from exactly solvable and integrable models or Lieb-Robinson bound arguments based on the structure of the dispersion relation \cite{chapter 4: Haegeman ElementExc}, few models can be found with a clear finite-particle pictures for low excitations. 
In this work, we propose an interesting approach to this problem. 
We start from a system of size $N$ with a dynamic described by a parent Hamiltonian, of an injective matrix product state, with finite range $L$. We then study the limit of $N,L\to \infty$ and possible re normalisation of the transfer matrix $\Gamma(L)$ under the additional restriction that an asymptotic regime for low particle density remains valid.

Under this limit, a clear particle-picture appears, which is completely related to the transfer matrix. Even-more so these excitations do not depend on any other microscopic properties of the ground-state.

In 1992, following the seminal work by Affleck, Kennedy, Lieb and Tasaki \cite{chapter 4: AKLT}, Fannes, et.al. \cite{Fannes} showed that Matrix Product States are the ground states of gapped local Hamiltonians. In this work, inspiring from this construction and considering an additional limit, we show that not only they are the ground-states of gapped Hamiltonians, but also this Hamiltonian describes the hopping bosons in one-dimension.

This work is divided into two distinct part. In the first part (\ref{section:bias},\ref{section:rewriting PH},\ref{section:FQ}), we elaborate on the construction. The details of the main theorem (\ref{Theorem: free Fermions}) can be found the appendix.
In the second part, we present some applications. First, we briefly describe some possible consequences and use for the field of classical and quantum Markovian dynamics (\ref{section:Stochastics}). Finally in the last part (\ref{section:Localization}), we revisit localization using the framework and philosophy of the work presented in the first part.
\newline

\section{From  K\"all\'en -Lehmann spectral representation to the tangent plane of Matrix Product States }
\label{section:bias}
Gunnar K\"all\'en \cite{chapter 4: Kaller} and  Harry Lehmann \cite{chapter 4: Lehmann} discovered independently that the two point-correlation in quantum field theory can be related to the free propagators,

$$\langle\Omega|T\phi(x)\phi(y)|\Omega\rangle = \int_0^\infty \frac{d\mu^2}{2\pi} \rho(\mu^2)\int \frac{d^4 p}{(2\pi)^4}\frac{1}{p^2-\mu^2+i\epsilon}e^{-ip(x-y)}$$
This is illustrated here for a scalar field theory, where $\rho(\mu^2)$ is a positive spectral density.
For a free scalar field, one can verify that the Fourier transform of the two-point correlation function, is inversely proportional to the dispersion relation,

$$\int d^4x e^{ipx}\langle\Omega|T\phi(x)\phi(y)|\Omega\rangle = \frac{i}{p^2-\mu^2+i\epsilon}$$
It is amazing to see how much information is stored simply inside the correlation function of the ground state of the system.

A first approach for studying low energy excitations in one-dimensional spin lattice makes use of the tangent plane of the ground-state MPS in the manifold \cite{chapter 4: Haegeman PostMPS}.
For a translational invariant MPS $|\phi\rangle$, 
\begin{eqnarray}
~~~~~~~~~~~~~~~~~~~~~~~~~~~~~~~~~~~~~~|\phi\rangle =\sum_{\vec{i}}\operatorname{Tr}\left(A^{i_1}\dots A^{i_N}\right)|\vec{i}\rangle
\end{eqnarray}
A simple proposition for an approximation of the basis of the m-particle subspace are the states,
\begin{eqnarray}
\label{eq:excAnsatz}
&|\psi\{B^{(i)}_\alpha,k_\alpha\}_{\alpha=1}^m\rangle =\sum_{n_1<\ldots<n_m} e^{i\vec{k}.\vec{n}}|\phi\{B^{(i)}_\alpha,n_\alpha\}\rangle \nonumber\\ 
&|\phi\{B^{(i)}_\alpha,j_\alpha\}_{\alpha=1}^m\rangle=\sum_{\vec{i}}\operatorname{Tr}\left(A^{i_1}\dots A^{i_{j-1}} B^{i_{n_1}}A^{i_{j+1}}\right. 
\left.\dots  B^{n_{j_m}}A^{i_{j+m}}\dots A^{i_N}\right)|\vec{i}\rangle
\end{eqnarray}
A sub-vector space can be considered, by choosing $B^{(i)}=A^{(i)}B$.

A particular approximation of the low energy spectrum is defined from the variational problem,
$$E_k=\min_{B}\sum_Z\frac{\langle \psi\{B,k\}| H_Z |\psi\{B,k\}\rangle}{\langle \psi\{B,k\}|\psi\{B,k\}\rangle},~~ \langle \phi|\psi\{B,k\}\rangle=0$$
In the case of frustration free Hamiltonians, such as parent Hamiltonians, $H_Z|\phi\rangle=0, H_Z\geq 0$, both numerator and denominator scale as $N$. The numerator is then finite while the denominator is given by the Fourier transform of the transfer matrix (\ref{eq:FourierTransfer}),
$$E_k\leq \frac{C}{\operatorname{Tr}\left(B^\dagger\mathcal{T}_k[\Gamma][B]\right)} ,~~\operatorname{Tr}\left(\rho B\right)$$
Generally maxima of the Fourier Transform of the 2-point correlation function, i.e. singular points in the  K\"all\'en - Lehmann spectral representation, represent the lowest excited particles.
Thus, we can argue that the tangent plane method is biased by the spectrum of the Fourier Transform of the Transfer Matrix $\Gamma$.

However, in the case that this particular ansatz is right, spectral properties transfer matrix, i.e. Fourier Transform, yields a certain amount of information about the spectrum of the Hamiltonian such as minima of the dispersion relation.

This particular observation leads us to the starting question of this work. Can we extract or make other predictions for the dispersion beyond this one-particle ansatz?
Is it possible to construct a family of models for which the spectrum can be completely extract from the transfer matrix?

We, indeed, manage to construct  a local Hamiltonian for which excitations depend purely on a quantum channel, and of which the m-particle eigenstates can be written using the m-particle basis (\ref{eq:excAnsatz}).
The procedure is based on a limit. It will also be interesting to study some consequence when the limit slightly breaks down.

\section{Rewriting Parent Hamiltonians}
\label{section:rewriting PH}
Parent Hamiltonians \cite{Fannes,chapter 2: Perez MPS rep} of Matrix Product States are defined as projectors onto the kernel of the vector space $\{\operatorname{Tr}\left(X A^{i_{j}}\dots A^{i_{j+L}}\right)\}$. Uniqueness of the ground state is assured if the map $X\to \operatorname{Tr}\left(X A^{i_{j}}\dots A^{i_{j+L}}\right)$ is injective.

In the case of our study, we notice that the following parametrization of the  parent Hamiltonian, pictured in figure (\ref{fig:parentH}), is of use to us.
Choose $C\in \mathcal{M}_{D^2}$, so that 
\begin{eqnarray}
\label{eq:parentH}
& H=\sum_j \left(\operatorname{Id}-h_{j,j+L}\right), \\
& h_{j,j+L}=\sum_{\vec{\alpha},\vec{\beta}}\operatorname{Tr}\left(C \left[A^{\alpha_j}\otimes \overline{A}^{\beta_j}\right] \dots \left[A^{\alpha_{j+L}}\otimes \overline{A}^{\beta_{j+L}}\right]\right)|\vec{\alpha}\rangle\langle \vec{\beta}| \nonumber
\end{eqnarray}
satisfies the parent Hamiltonian conditions. A simple choice is, 
\begin{eqnarray}
\label{eq:Correction}
~~~~~~~~~~~~~~~~~~~~~~~~~~~~~~~~~~~~~~~~C^{(i_1,i_2)}_{(j_1,j_2)}= \left(\langle j_2 i_2|\Gamma^L|i_1 j_1\rangle_{j_1j_2}^{i_1 i_2}\right)^{-1}
\end{eqnarray}

We must first point out a striking property of parent Hamiltonians which becomes apparent from equation (\ref{eq:parentH}) and figure (\ref{fig:parentH}).
Let us make use of the ansatz equation (\ref{eq:excAnsatz}) for the choice $B^{(i)}_\alpha=A^{(i)}_\alpha B$.
The action of the parent Hamiltonian, as given by  equations (\ref{eq:parentH},\ref{eq:Correction}) maps the $m$-particles wave equations onto a subspace spanned by $1,\ldots,m$-particles states.
Furthermore, and most importantly, the reader should see that the decomposition under the action, depends purely on the transfer matrix, $\Gamma$,
\begin{eqnarray}
\label{eq:Transfer}
~~~~~~~~~~~~~~~~~~~~~~~~~~~~~~~~~~~~~~~~~~~~~~\Gamma=\sum_j \overline{A}^{j}\otimes A^{j}
\end{eqnarray}

\begin{figure}[t!]

	\vspace{-0.1 cm} 
	\hbox{\hspace{+2 cm}\includegraphics[width=0.9\textwidth]{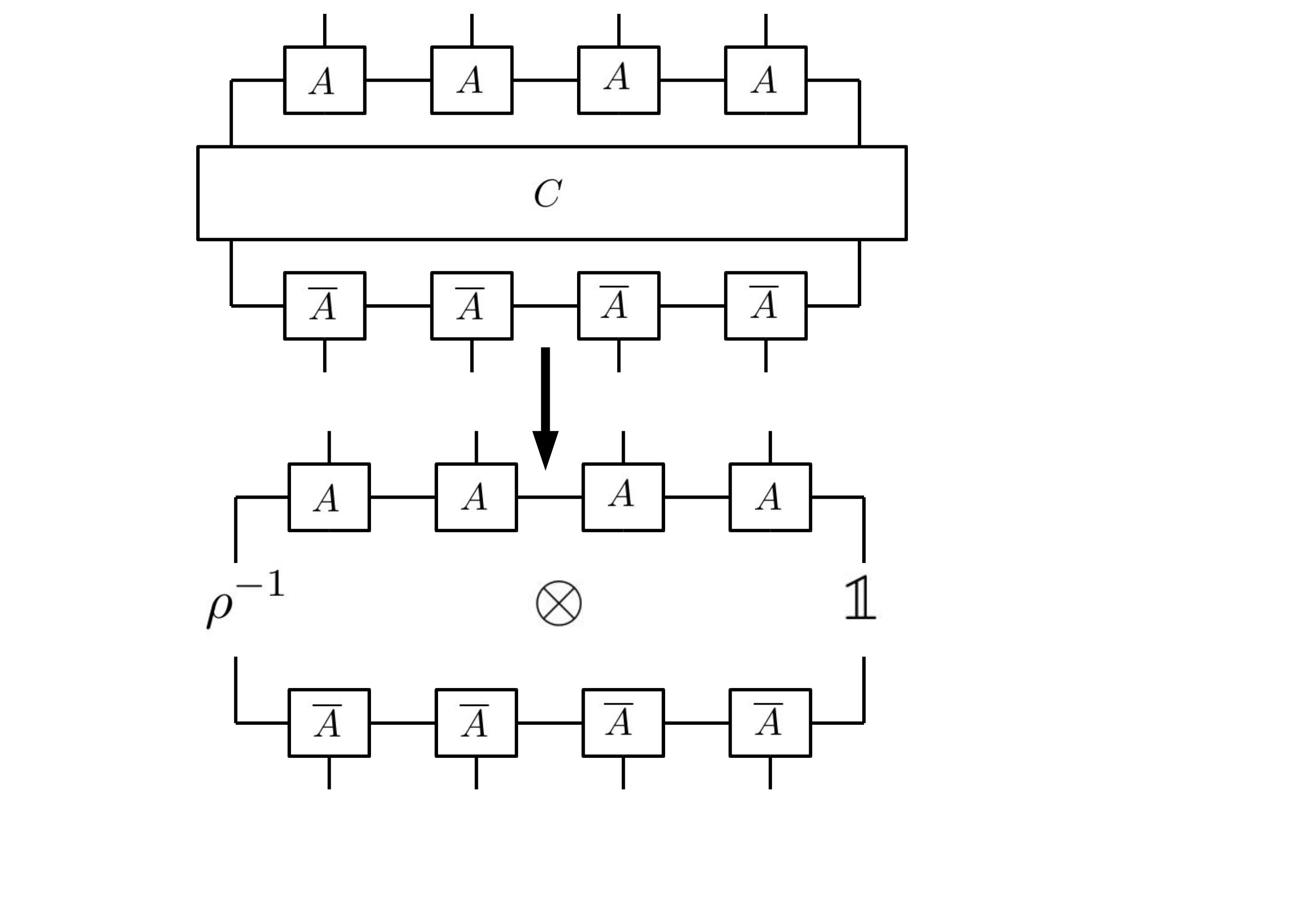}}
	\vspace{-1 cm}  
	\caption{Some useful representation for the parent Hamiltonians of a Matrix Product State. The choice of the tweaking matrix $C$ is not unique. A possible choice is given by equation (\ref{eq:Correction}). For this particular choice, the Matrix Product is an eigenvector with eigenvalue $1$ of this interaction term. Furthermore, they are projectors. However, it is unlikely that some low energy spectrum can be solved generically for any of such representation. One notices that all these representations convergence towards $\rho^{-1}\otimes \mathbb{1}$ in the limit of large range and system size.}
	\label{fig:parentH}
\end{figure}
\begin{figure}[t!]
	\vspace{-0.1 cm} 
	\centering \includegraphics[width=0.7\textwidth]{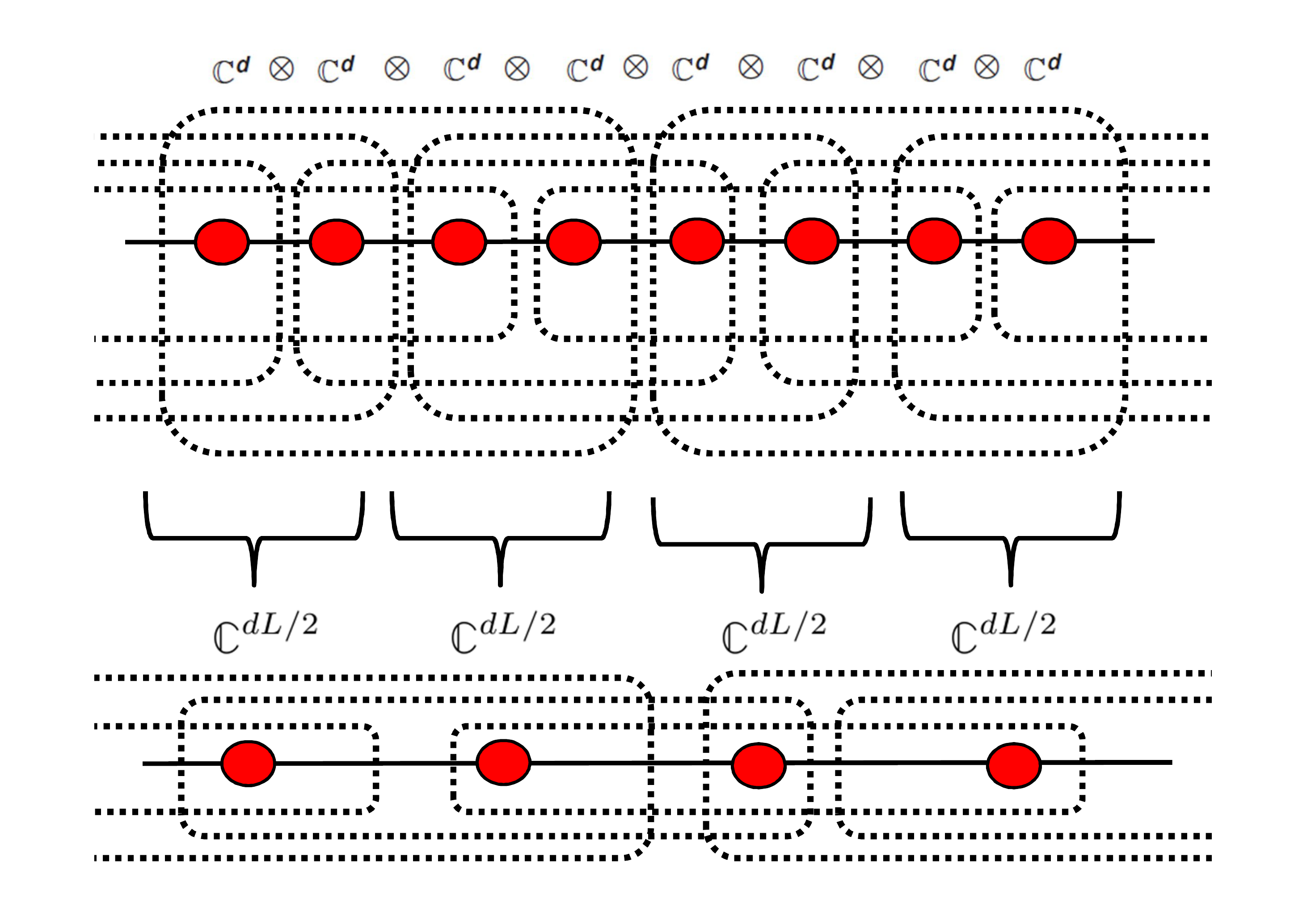}
	\vspace{+0.1 cm}  
	\caption{Since we take the limit $L\to \infty$, one might argue the physical relevance, as the Hamiltonian seems, at first sight, to lose its local property. By blocking sites, represented with red disks, and regrouping the interactions (dotted lines) we see however it is not the range of interactions which increases, but the dimension of the local Hilbert space (red disks in the second line). While the local Hilbert space grows in dimension, the interactions remain local, but unbounded, in the thermodynamic limit as $L/N \to \infty$.}
	\vspace{+0.1 cm}  
	\label{fig:limit}
\end{figure}
No other information about the ground state or particular Hamiltonian details, is encoded inside the dynamical spectrum besides the transfer matrix. 
We will see that we can relate the minima of the dispersion with the spectrum of the transfer matrix in the vicinity of the unit circle.\newline

\paragraph{Limit, $N,L\to \infty$}
\label{par:limit}
Clearly the true excitations generally do not satisfy the ansatz (\ref{eq:excAnsatz}). However, this is the case in the following limiting procedure.
First of all, by a choice of gauge $A^{(i)}\to S A^{(i)} S^{-1}$, we can make sure that the transfer matrix is trace preserving with a diagonal fixed point,
$$\Gamma^*[\mathbb{1}],~~\Gamma[\rho]=\rho=\operatorname{diag}(\vec{\lambda}),~~\operatorname{Tr}(\rho)=1$$
In the case of an injective MPS, it can be shown that the fixed point $\rho$ is unique and non-degenerate \cite{Fannes,chapter 2: Perez MPS rep}, even more so the spectrum always lies within the unit circle.
Second, we take the thermodynamic limit $L<N\to \infty$, while varying the range of each local interaction of the parent Hamiltonian. We want a meaningful notion of quasi-particles. As we add more particles, in the generic case, the energy should increase with the number of particles. Only in special cases should bound states appear. A reasonable choice is $L/N\to 0$.
It should be noted for this particular choice, there is still a notion of "local Hamiltonian" under rescaling of the metric with $L$, as shown in figure (\ref{fig:limit}).

Finally, we could renormalise the transfer matrix $\Gamma(L)$. However, we would like the interaction term to forget the correction $C$, in equation (\ref{eq:Correction}). 
Indeed our choice mostly converges as $C\to \rho^{-1}\otimes \mathbb{1}$. This is not the case if the second largest eigenvalue converges towards the unit circle as $O(1/L)$. So, we must keep in mind that there are some restriction on the re-normalization of the transfer.

As a side note as long as the spectrum of the renormalized super-operator stays within this $1-O(1/L)$-circle, the martingale method remains valid and the Hamiltonian is gapped \cite{Fannes}. \newline

\section{A Natural First Quantization: Hopping of Virtual Particles}
\label{section:FQ}
\subsection{One-particle excitations}
We saw earlier that the subspace spanned by $1,\ldots,m$-particles states remains stable under the dynamic. 
In the final result of theorem (\ref{Theorem: free Fermions}), we prove that they can be computed exactly within these subspaces.

In the introduction, we initially computed the energy using variational methods. As we desire to derive the whole spectrum, we seek an alternative approach. The goal is to demonstrate that for our limit, the states become exact eigenvectors of the Hamiltonian,
$$\frac{\|H|\kappa\rangle -\lambda |\kappa\rangle\|_2^2}{\langle \kappa |\kappa \rangle}\to 0 \rightarrow \mbox{ we set $|\kappa\rangle$ to be an eigenvector}$$
Let us first have a look at the limit without re-normalization of the transfer matrix.
We can propose the following representation for the one-particle states $|\psi\{X_a,k\}\rangle$ with momentum $k$,
\begin{eqnarray}
~~~~~~~~~~~~~~~~~~~~|\psi\{X_a,k\} =\sum_n e^{ikn} \sum_{\vec{i}}\operatorname{Tr}\left(A^{i_1}\dots A^{i_{n-1}} A^{i_{n}}X_a A^{i_{n+1}}\dots A^{i_N}\right)|\vec{i}\rangle
\end{eqnarray}

In Lemma (\ref{Lemma: one-particle}), we demonstrate that such states are indeed eigenstates, with energy that can calculated from an effective one-particle Hamiltonian $H_{\mbox{\scriptsize eff}}^{(1)}$. One of the central objects, contained in this effective Hamiltonian, is the Fourier Transform of the Transfer matrix $\mathcal{T}_k[\Gamma]$,
\begin{eqnarray}
\label{eq:FourierTransfer}
~~~~~~~~~~~~~~~~~~~~T_k[\Gamma]=& R_{\rho}+e^{ik}\frac{\Gamma}{2(\mathbb{1}-e^{ik}\Gamma)}\circ R_{\rho}+R_{\rho}\circ e^{-ik}\frac{\Gamma^*}{2(\mathbb{1}-e^{-ik}\Gamma^*)}
\end{eqnarray}
 This result is very evocative of the K\"all\'en -Lehmann spectral representation, \cite{chapter 4: Kaller, chapter 4: Lehmann}. Hence, all information about the spectrum can be derived from the spectral analysis of 
$\mathcal{T}_k[\Gamma]$.\newline

The following turns out to be extremely useful. One should notice the invariance of the one-particle states with momentum $k$ under the gauge transformation,
\begin{eqnarray}
\label{eq: main gauge}
~~~~~~~~~~~~~~~~~~~~~~~~~~~~~~~~~~A^{(i)}_a \rightarrow B^{(i)}_a = A^{(i)}X_a + e^{ik}Y_a A^{(i)} - A^{(i)}Y_a
\end{eqnarray}
However, this invariance breaks down when additional particles are taken into consideration.
The results for the many-particle wave-function are examined in the next section. We should already announce that in the long range and large system limit, we derive the eigenstates using the philosophy of the Bethe ansatz \cite{chapter 4: Bethe Ansatz}. Each eigenvector of the effective one-particle Hamiltonian, which is related to  $\mathcal{T}_k[\Gamma]$, is a particle with momentum $k$. These particles can theb be combined in a m-particle wave function. The energy is then the sum of the individual energy of each particle,
\begin{eqnarray}
H|\psi[\{X_1,\ldots,X_k;k_1,\ldots,k_m\}]\rangle \to \sum_j E[X_j,k_j]|\psi[\{X_1,\ldots,X_k;k_1,\ldots,k_m\}]\rangle 
\end{eqnarray}

\subsection{Breaking down of the Limit and Bound-States}

One could try to look "beyond" this simple one-particle picture. Generally, there exists modes consisting of a superposition of scattering particles. These states consisting of more than a single particle, sometimes have a lower energy than the other stable individual particles.

The framework, we have presented here, gives us the perfect opportunity for studying this phenomenon and even more so derive how this is reflected in the spectral properties of the transfer matrix.

For this, we need to renormalize the transfer matrix $\Gamma[L]$ with the range $L$. The idea would be to let the ground state become "critical". As the spectrum gets closer to the boundary of the unit circle, some particles, i.e. eigenvectors of $\mathcal{T}_k[\Gamma]$, become unstable and splits into other particles.

Technically, the ansatz breaks down with an error $\epsilon_1$, given by equation (\ref{eq:error1Part}). The new excited state consists of the 1-particle state in superposition with its fusion options $\left\{Y_{a_1},Y_{a_2}\right\}\to X_a$.

For simplicity, and as a sake of illustration, we look at normal unital trace-preserving completely positive operators,
$$\Gamma \Gamma^*=\Gamma^*\Gamma,~~\Gamma^*[\mathbb{1}]=\mathbb{1},~~\Gamma[X_a]=\lambda_{X_a} X_a$$

The eigenvectors are orthonormal and characterized by the structure tensor $f^a_{bc}$,
$$X_{a}X_{b}=\sum_c f^c_{ab} X_c $$

Normal unital quantum channels are particularly interesting here as the effective Hamiltonian is exactly the Fourier Transform of the Transfer matrix. The eigenvectors $X_a$ of the transfer matrix, are finite dimensional representations of the particles  with wave function $|\psi\{ X_a,k\}\rangle$,
\begin{eqnarray}
H |\psi\{ X_a,k\}\rangle \nonumber=\left(2L- 2\operatorname{Re}\frac{1}{1-e^{i\left[k-\phi_{X_a}\right]}|\lambda_{X_a}|}\right)|\psi\{ X_a,k\}\rangle 
\end{eqnarray}
with $\lambda_{X_a}=e^{-i\phi_{X_a}}|\lambda_{X_a}|$. Hence the phase of the eigenvalue of $X_a$ is interpreted as its rest-momentum.

Continuing our discussion at the beginning of this short section, we let the part of the spectrum converge towards the unit circle. The rate of convergence becomes clear by comparing $\langle\psi\{ \overline{X}_a,k\}|H |\psi\{ X_a,k\}\rangle $ with the generic one-particle energy, $\Delta_a= 2L - 2\operatorname{Re}\frac{1}{1-e^{i\left[k-\phi_{X_a}\right]}|\lambda_{X_a}|}-\langle\psi\{ \overline{X}_a,k\}|H |\psi\{ X_a,k\}\rangle$,

$$\Delta_a \approx  \sum_{\alpha,\beta} \frac{|f^\alpha_{a,\beta}|^2}{\sigma_{a,k}}\Big|\sum_{j=1}^L |\lambda_{X_\alpha}|^j |\lambda_{X_\beta}|^{L-j} e^{i[k-\phi_{X_\alpha}+\phi_{X_\beta}]j}\Big|^2$$
First, we see that $\Delta_a>0$, and so decreases the energy.

When two particles have approximately the same eigenvalues, $|\lambda_{X_\alpha}|\approx |\lambda_{X_\beta}|$, and approaches the unit circle as $|1-|\lambda_{X_\alpha}|\ |\approx \gamma (\ln L)/L$. The new contribution to the energy $\Delta_a$ peaks at $k=\phi_{X_\alpha}-\phi_{X_\beta}$ in the order of $L^{1-2\gamma}$.
Additionally, the contribution is proportional to the discrepancy of the one-particle sub-space. Hence for $\gamma<1/2$, the one-particle subspace is unstable under the dynamic.
The state $X_c$, resulting from the fusion of $X_\alpha$ and $X_{\beta}$, has a finite probability of decaying to these two particles. The bound state $ |\psi\{X_c,k\}\rangle +\sum_q c_q |\psi\{X_\alpha,k-q;X_\beta^\dagger,q\} $ emerges. This state has a minimum at $k=\phi_{X_\alpha}-\phi_{X_\beta}$.\newline

\paragraph{Example}
\label{par: example}

We illustrate this for the transfer matrix with eigenvectors, $\mathbb{1},\sigma^+, \sigma^-, \sigma^z$, with respective eigenvalues $1, \lambda_+, \lambda_-$ and $\lambda_z$.
We choose $\lambda_+=e^{i\phi}\lambda=\overline{\lambda}_-$,      $\lambda_z=\lambda^2$.
In the long range limit, the low energy spectrum is built by adding particles of the type $\sigma^+, \sigma^-$ or $\sigma^z$ to the wave function.

Choosing to renormalize the spectrum with the range $L$, yields a rescaling of the errors

$$\lambda(L)=e^{-\gamma \frac{\ln(L)}{L}}\rightarrow \epsilon_1(\sigma^+)=\epsilon_2(\sigma^+)\approx \frac{L^{1-2\gamma}}{\gamma \ln(L)}, \epsilon_1(\sigma^z)\approx 2\gamma \ln(L)L^{1-2\gamma}$$
For $\gamma=1/2$, the 1-particle subspace remains stable for $|\psi\{\sigma^+,k\}\rangle$, $|\psi\{\sigma^-,k\}\rangle$, and we get the bound state, 

$$|\psi^{(2)}\{Z,k\}\rangle =|\psi\{Z,k\}\rangle +\sum_q c_q^{(1)} |\psi\{\sigma^+,k-q;\sigma^-,q\}+c_q^{(2)} |\psi\{\sigma^-,k-q;\sigma^+,q\}$$

\subsection{Solving the Low Energy Spectrum in the Long Range Limit}

In 1931, Hans Bethe \cite{chapter 4: Bethe Ansatz} initiated a new technique which ultimately lead to a resolution of the Heisenberg spin-chain. His method, which was later given his name, the coordinate Bethe ansatz. In his paper, Bethe introduced $n$-particles wavefunctions of the form,

$$\psi(\vec{k}) = \sum\limits_{0\leq x_1 \leq \ldots \leq x_n \leq N} \sum\limits_{P\in\mathcal{S}_n}A(P)e^{i(k_{P_1} x_1 +\ldots+k_{P_N} x_N)}\phi(x_1,\ldots,x_n)$$
where $\phi(x_1,\ldots,x_n)$ is a wave-function in first quantization of spins at positions $x_1,\ldots,x_n$. 

Such plane-wave decomposition needs to satisfy a consistency condition,

$$A(PT_j) = S(e^{ik_{P_j}},e^{ik_{P_{j+1}}})A(P)$$
with $T_j$ the transposition operator of $j$ and $j+1$. Bethe's intuition is based on the idea that such many-body problems  should in the most, yet non-trivial, case only depend on the scattering of two particles. Such reduction of many-particle scattering is intertwined with the Yang-Baxter equation, 
\begin{eqnarray}
~~~~~~~~~~~~~~~~~~~~~~S(u_1,u_2)S(u_3,u_1)S(u_2,u_3)= S(u_3,u_2)S(u_1,u_3)S(u_2,u_1)
\label{eq:Yang-Baxter}
\end{eqnarray}

This framework turns out to be the most natural for solving the many-particle eigenstates.

First of all, one should wonder what the dynamic actually does. Lemma's (\ref{Lemma: one-particle}) offers us an elegant picture. Under the action of an interaction terms, a particle will hop to another size at a distance $<L$. After hopping, this particle is split as a new superposition of other particles. This transformation is determined by the Transfer matrix and the hopping distance.
Secondly, we ask how the interaction affects multiple particles simultaneously. When each particles are at the large distance, greater than $L$, from each other, this case never happens. This begs to questions of whether we should actually care about such case. We prove in Lemma (\ref{Lemma: Asymptotic Regime}), that, indeed, we should not do as such. For such asymptotic regime to valid, it is imperative that the main weight of the wave-function consists of particles far apart. This is an additional restriction on the allowed density state, meaning that it should be relatively low as it can be found in more details in the statement of the Lemma.

Which particles should we choose? In the one-particle section, we learned that the states are invariant under the gauge transformation (\ref{eq: main gauge}). 
The best choice turns out to be,

$$B^{(i)}_a=A^{(i)}X_a + e^{-i k_a}Y_aA^{(i)} - A^{(i)}Y_a,~Y_a =  (\operatorname{Id}-e^{-ik}\Gamma^*)^{-1}[X_a]$$
where $B^{(i)}_a$ should be interpreted as a particle with momentum $k_a$.

Generally, it turns out that we cannot find the whole emergent spectrum.  However, due to the asymptotic regime and the frustration freeness of the Hamiltonian, the symmetric states emerge as eigenstates. The coefficients $A(P)$ satisfy,
$$ A(T_{\alpha,\beta}P) = A(P)$$
for the transpositions $T_{\alpha,\beta}$. These states represent the particles as hopping bosons. 

As a consequence of the Bethe Ansatz, the energy of a m-particle state is the sum of the individual energies,
$$ E[X_{a_1},k_1;\ldots;X_{a_n},k_m]= \sum_{j}E[X_{a_j},k_j]$$
This implies the scaling of the energy with the particle number, $m  \times 2L$. Because in physics only differentials of the energy matters, a particle conservation is a consequence of this scaling.
 
 Strangely enough, we can solve the whole energy spectrum for normal unital quantum channels. Recalling our past discussion about the one-particle spectrum, we saw that there were different particle types for each momentum. The momentum was an additional label of the particle type. This is not the case for normal unital quantum channels. Thus there are at most $D^2-1$ different particles, including their anti-particles. An interesting consequence is that we are able to solve the low energy spectrum exactly. Utilizing the parity symmetry, the spectrum is thus simply any symmetric and antisymmetric superposition.
 Hence,
$$A(T_{\alpha,\beta}P)=\pm A(P) $$
  
 This complete results are stated formally in Theorem (\ref{Theorem: free Fermions}) and its corolary (\ref{Corollary: Exactly Solvable}) with their proofs in this appendix, which we summarize modestly as,
 \begin{quote}
 	\textit{ Every Injective Matrix Product State is the vacuum of a Hopping Theory}
 \end{quote}
\section{An Emerging Virtual State Space for Stochastics}
\label{section:Stochastics}
The main subject of this work is centralized around properties of some local Hamiltonian. Our tensorial notation of parent Hamiltonians (\ref{fig:parentH}) showed us that natural local sub-algebra's appear in the virtual space of the tensor network. Furthermore, these were related only and solely to the transfer matrix. We even showed that the low energy spectrum was exactly solvable, thereby interpreting some generators of these sub-matrix-algebras as physical particles.
This point of view offers particularly interesting research directions in the field of stochastics, quantum and classical Markov processes.
One could label each of these virtual particles and their position inside the network. As we showed earlier, under the action of the Hamiltonian, these particles are annihilated at the site of the action and new particles are created at the boundary of the interaction. If it turns out that such actions happens with a positive probability, then we can talk about a Markov process on this particular state space $\Omega$.
We briefly expose some applications for the Glauber dynamics, and leave the possibilities open for future work in this direction. 
\begin{figure}[t!]
	\hbox{\hspace{+3 cm}\includegraphics[width=0.7\textwidth]{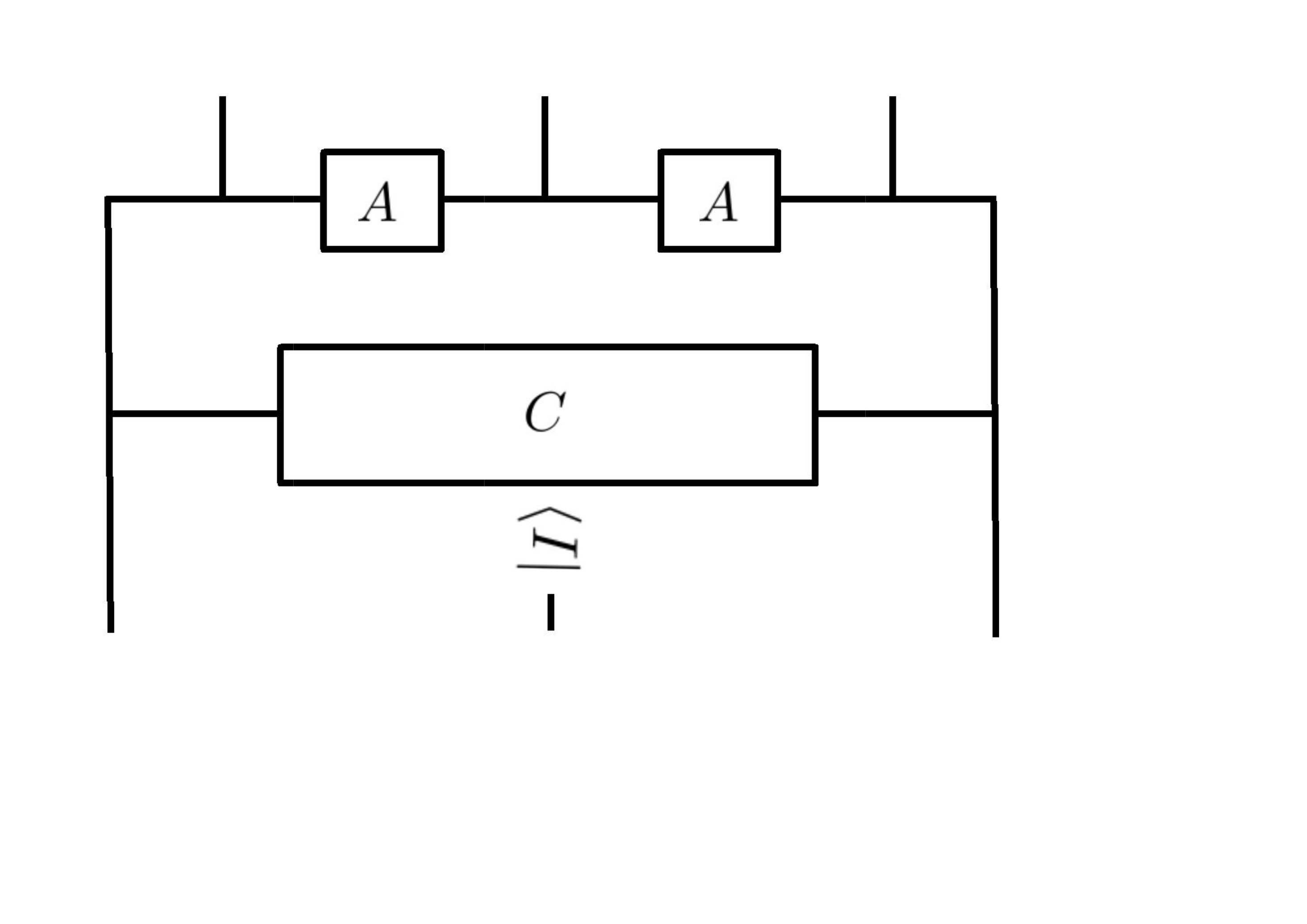}}
		\vspace{-2cm}
	\caption{A tensorial representation of the famous 1-dimensional Glauber dynamics often studied for Ising.}
	\label{fig:Glauber}
\end{figure}

In 1953, an algorithm was presented for calculating a particular canonical ensemble. This method of computation grew to be known as the Metropolis-Hastings \cite{chapter 2: Metropolis} algorithm. This algorithms is based on a Markov chains which takes random samples from a probability distribution. A famous use of this algorithms is for studying the gibbs state of Ising model,
$$\rho_{\beta} = \frac{e^{\beta H}}{\operatorname{Tr}e^{\beta H}},~~H=\sum_j s_{j}s_{j+1},~~s_j \in\{+1,-1\}$$
The Matrix Product State representation of the one-dimensional Ising model is given by the tensor, 
$$ A_{ab}^{(i)}= \delta_{i,a}\delta_{i,b}A_{a,b},~~A = \frac{1}{e^{\beta}+e^{-\beta}}\left(\begin{array}{cc}e^{\beta} & e^{-\beta}\\e^{-\beta} & e^{\beta}\end{array}\right)$$
The most simple Markov chain one can think of is the so-called Glauber dynamics in the state space of spin. At each time step, one flips a spin with a probability computed from the configuration of its neighbour. It should be obvious that the generators of such dynamics are local and actually related to a parent-Hamiltonian of some Matrix Product State. Therefore, the discussion of this work should offer some interesting new perspectives on this very well-known topic.
On can check that the local generator, $\mathcal{L}=\sum_{j}\mathcal{L}_{j-1,j,j+1}$, $\mathcal{L}_{j-1,j,j+1}= T_{j-1,j;j+1}-\mathbb{1}$ with $ T_{j-1,j;j+1}$ given in figure (\ref{fig:Glauber}), is the generator of such Glauber dynamics. In this case, $C$ is given by the following equation,
$$ C \circ A^2 = \mathbb{1},~~ (A\circ B)_{ij}=A_{ij}\dot B_{ij}$$
with the operation -$\circ$ being the Hadamard product.
In the spirit of the main discussion, one can find that Pauli matrix $Z$ is one-and only natural virtual particle. Therefore, we get a new state space by using the first quantization language with such particles. Writing the state, 

$$\phi(Z^{(a_1)},\dots,Z^{(a_{N})}) = \sum_{\vec{i}}\operatorname{Tr}\left(A^{i_1}Z^{(a_1)}\dots A^{i_N}Z^{(a_N)} \right)|\vec{i}\rangle,~~a_j \in\{0,1\}$$
similarly to the space of spin configurations $\sigma \in \{+1,-1\}^N$, we have the possibility to define a state space which mark the presence of a particle $Z$ at the various sites.
As  promised this new state space, is closed under the dynamics, as one can verify that,
$$T_{j-1,j,j+1}\phi(Z^{(a_1)},\dots,Z^{(a_{N})})=\sum_{b_{j-1},b_{j+1}} \tau^{a_{j-1},a_j,a_{j+1}}_{b_{j-1}b_{j+1}}\phi(Z^{(a_1)},\dots,Z^{(b_{j-1})},\mathbb{1},Z^{(b_{j+1})},\dots,Z^{(a_{N})})$$
and,
$$\tau^{a_{j-1},a_j,a_{j+1}}_{b_{j-1},b_{j+1}}=\frac{1}{4}\operatorname{Tr}\left(A^{T}Z^{a_j}A^{T}Z^{a_{j-1}+b_{j-1}}CZ^{a_{j+1}+b_{j+1}}\right)$$
One sees that the expression $A^{T}Z^{a_j}A^{T}$ is symmetric when $a_j=0$ and anti-symmetric for $a_j=-1$. This implies that there will be no creation of additional particle $Z$, but only a hopping to the left or the right with possible annihilation if another particle is already present,
$$\tau^{a_{j-1},1,a_{j+1}}_{a_{j-1}+1,a_{j+1}}= \tau^{a_{j-1},1,a_{j+1}}_{a_{j-1},a_{j+1}+1} $$ 
One dimensional Glauber dynamics are known to be exactly solvable. This small calculation yields us the famous relations for the dynamical correlation function,
$$C_{n-m}(t,s)=\langle s_n(t) s_m(s)\rangle,  \partial_t C_{n}(t,s)=-C_{n}(t,s)+\lambda (C_{n-1}(t,s)+C_{n+1}(t,s))$$

This first quantization, which appears naturally from this notation of the parent Hamiltonians (\ref{fig:parentH}),(\ref{fig:Glauber}), presents us an interesting new state space. The stochastic dynamics, of which the tensor network is a fixed point, acts an error corrector. The virtual particles are moved around, created and annihilated a certain rate, while the whole converges towards the vacuum, i.e. the Matrix Product State without any particles added to the virtual space. 
\section{Revisiting Localization}
\label{section:Localization}
In 1958, Anderson \cite{chapter 1: Anderson} presented the results of his study of diffusion in random lattices. He showed that for a particular form of a Hamiltonian consisting of random potentials, when the fluctuations were sufficiently large, diffusion would cease. The result first considered as a strange mathematical trickery, grew to a large field, and better understanding of insulation-conductor transition, such as Mott transition \cite{chapter 4: Tingelen}.

50 years later, a part of the research community focussed their attention on the so-called Tensor Networks  \cite{Fannes,chapter 2: Perez MPS rep}. While much richer phases were discovered beyond Landau, the philosophy of local order parameter were shifted towards variation of local tensors \cite{chapter 4: Gu}.
Furthermore, Gross-Pitaevski equations have been generalized for such states \cite{chapter 4: Haegeman QGross}. Time-evolution can be understood as a transport along a Tensor-Network manifold.

Even-though localization is a dynamical property, we show in this work that by combining both ideas presented above, the same phenomena is realised on the level of the manifold of Matrix Product States. The dynamic is exchanged with variation of the local tensors in the manifold, on top of which we consider some random fluctuations.

\begin{figure}[t!]
	\hbox{\hspace{+3 cm}\includegraphics[width=0.7\textwidth]{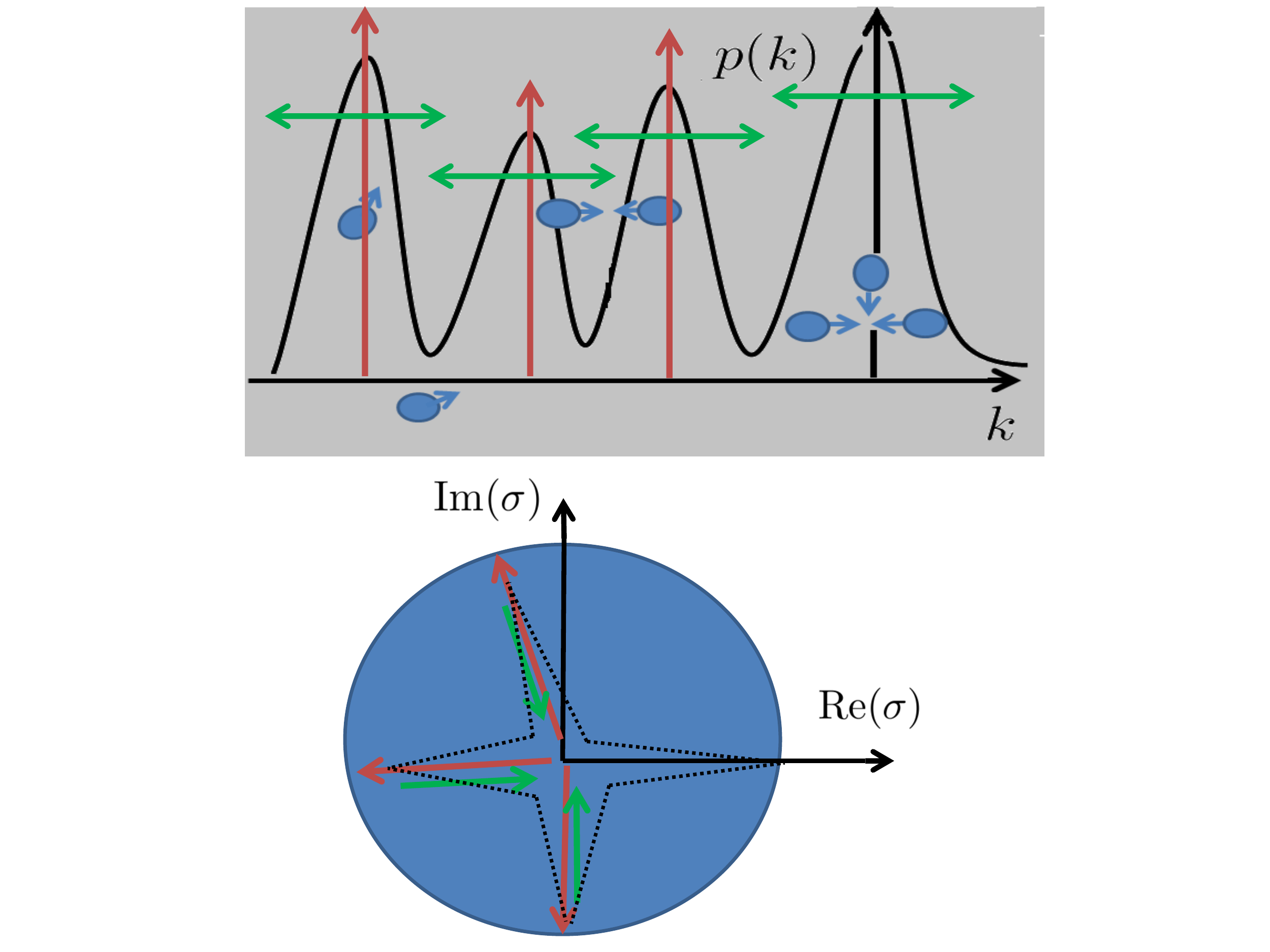}}
	\caption{$|\mbox{MPS}\rangle$ is a sea consisting of large number particles scattering with one-another. The distribution of the momentum has a finite number of peaks corresponding to the phase of the largest eigenvalues of the transfer-matrix. Diffusion (red) decreases the variance of the peaks, while localization increases it. On the level the eigenvalues of the transfer matrix, this corresponds to respectively pushing the eigenvalues towards unit circle and the center.}
	\label{sea}
	\vspace{+2mm}
\end{figure}
A very elegant intuitive interpretation of MPS leads us to this insight.
In the first part of this work, we showed that the Fourier Transform of the transfer matrix of a Matrix Product, contains information about low-energy particles of a local Hamiltonian.
Therefore, we propose that a translational invariant MPS can be established to be sea of scattering particles, from which the momentum-distribution has a finite number of peaks, as seen in figure (\ref{sea}),
\begin{eqnarray}
|\mbox{MPS}\rangle=\sum_{k,m}  |m(0)\rangle + |(m(k)+m(-k))\rangle +|(m(k_1)+m(k_2)+m(k_3))|k_1+k_2+k_3=0\rangle  +\dots
\label{seaMPS}
\end{eqnarray}

This interpretation of MPS becomes more transparent, when trying to explain the transition from diffusion to localization as we do in section (\ref{sect:loc}).
As shown in \cite{chapter 4: Zauner Excitations}, by studying the z-transform of 2-point-correlation functions we can probe the weight of various momenta. In some sense, it means that the phase of an eigenvalue is related to the momentum of some set of particles in the sea.
We also argued earlier that the momentum densities $p(k)$ are related to the eigenvalues of the transfer matrix of the MPS, and this was exact for normal transfer matrices.\newline

\begin{figure}[t!]
	\vspace{-5mm}
	\hbox{\hspace{+2.5 cm}\includegraphics[width=0.7\textwidth]{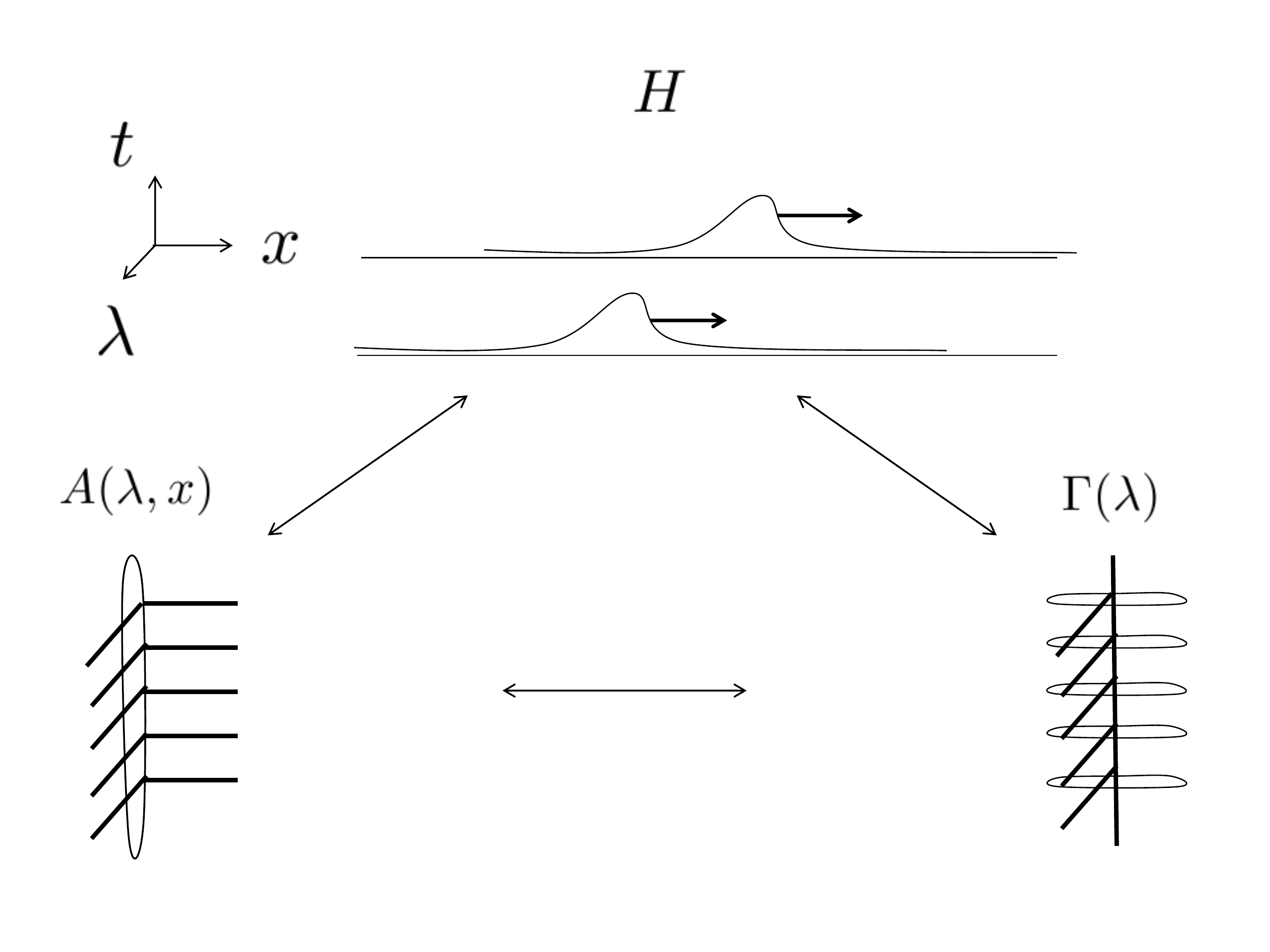}}
	\caption{Simple illustration of the different mapping $1+1+0 \leftrightarrow 0+1+1\leftrightarrow 1+0+1$.}
	\label{holo1d}
	\vspace{+2mm}
\end{figure}

\section{Set-up}
\label{sect:set-up}
Consider a continuous one-parameter family of Matrix Product States, $\{B^{i}[s]=A^{(i)}[s](\lambda)+W^{(i)}[s]\}$,
\begin{eqnarray}~~~~~~~~~~~~~~~~~~~~~~|\psi\{B^{i}[s]\}\rangle =\sum_{\vec{i}}\operatorname{Tr}\left(B^{i_1}[1]\dots B^{i_1}[N]\right)\sigma^{+i_1}\dots \sigma^{+i_N}|\Omega\rangle
\label{bosonMPS}
\end{eqnarray}
As presented earlier the parameter $\lambda \in I$ is the equivalent of the time-evolution, in the sense of quasi-adiabatic evolution \cite{chapter 1: Osborne quasAdev}. The extra terms in the tensors $W^{(i)}[s]$ are taken randomly from some measure $d\mu(W^{(i)}[s])$.

We need now to make the claimed equivalence more precise. Let us start with the time evolution of the density.
The creation of a particle at site $0$, followed by a measurement the outcome at site $n$, can be calculated from the correlation function. As we see below this can be related to the transfer matrix of the MPS-family $\Gamma(\lambda)=E_\mu\left(\sum_j B^{j}[s]\otimes \overline{B}^{(j)}[s]\right)(\lambda)$,
\begin{eqnarray*}
|\psi(0,n)|^2&\propto|\langle O_0 M_n\rangle-\langle O_0 \rangle \langle M_n\rangle|\\
&\propto \frac{\left|\langle l[O]|\left[\Gamma^n-\Gamma^\infty\right]|r[M]\rangle\right|}{\sqrt{\langle l[O]|l[O]\rangle}\sqrt{\langle r[M]|r[M]\rangle}}\\
&\leq \|\Gamma^n(\lambda)-\Gamma^\infty(\lambda)\|_1
\end{eqnarray*}
Anderson studied the time-evolution of the probability distribution, $|\psi(0,n,t)|^2$ of a particle initially at the origin and evolving under a dynamic described by a Hamiltonian $H=\tilde{H}+\sum_xW(x)$. He,\cite{chapter 1: Anderson,chapter 4: Tingelen}, showed the amazing result that in one and two dimension(s) for finite magnitude of fluctuations, such distribution remains localized in the sense that,
$$\sum_n |n|\ |\psi(0,n,\lambda)|^2 < C$$
As argued above, we can now argue localization when studying the function $\Xi(\lambda,W)$,
$$\sum_n |n|\ |\psi(0,n,\lambda)|^2 \leftrightarrow \Xi(\lambda)=\frac{\sum_n|n|\  \|\Gamma^n(\lambda)-\Gamma^\infty(\lambda)\|_1 }{\sum_n\  \|\Gamma^n(\lambda)-\Gamma^\infty(\lambda)\|_1} $$
for some measure of magnitude of the fluctuations $W$.

The next step, is for us to properly connect $\lambda$ with time $t$ and the diffusion. While $\lambda$ maybe defined on a small interval, time always goes to infinity. Therefore, in the same spirit as the quasi-adiabatic evolution \cite{chapter 1: Osborne quasAdev}, time must be varied much more slowly when varying $\lambda(t)$. Additionally, the precise relation $\lambda(t)$ is fixed by the property of diffusion of the family of MPS,
$$\Xi(\lambda(t),M=0)\leq D \sqrt{t}$$
The set-up is summarized in the table below,
\begin{center}
	\begin{tabular}{ | l | c | }
		\hline
		Dynamical & Tensor Network\\ \hline \hline
		$W(x)$ & $W^{(i)}[x]$\\\hline
		$t$ & $\lambda(t):\Xi(\lambda(t),W=0)\leq D\sqrt{t}$ \\ \hline
		$U(t)=\exp(itH)$ & $A^{(i)}(\lambda)+W^{(i)}[x]$\\\hline
		& \\
		$|\psi(0,n,t)|^2$ & $\frac{  \|\Gamma^n(\lambda)-\Gamma^\infty(\lambda)\|_1 }{\sum_n\  \|\Gamma^n(\lambda)-\Gamma^\infty(\lambda)\|_1}$ \\
		\hline
	\end{tabular}
\end{center}
\section{Localization}
\label{sect:loc}
Finally, we can proceed and describe localization.
While proving  Anderson's  localization is extremely challenging, this mapping onto a Tensor Network-picture gives a fresh simplistic view of the problem.
The following explanation is illustrated in figures (\ref{sea}), and (\ref{holo1d}).
Let us bring our insight of the translational MPS (\ref{seaMPS}) come to play. 
The reason for diffusion in the first place is that the leading terms of the wave function is some wave-packet. This is compatible with our picture as this means that the distribution $p(k)$ of momentum has sharp peaks around a finite number of momenta. 
Localization arises, when leading terms of the wave-function consists of a large number of different wave-packets with different momentum. This is again compatible with the sea-picture (\ref{seaMPS}). A translational invariant MPS has zero-momentum. In the case of a sea of localized states, the distribution $p(k)$ becomes more and more peaked around the origin. Therefore no particles are scattering as, so they are all localized.

Let us translate the sea-picture, 0+1+1, for the transfer-matrix, i.e. 1+0+1. 
We saw in the previous section that the probability distribution $|\psi(0,n,t)|^2$ is related to the spectrum of the transfer matrix $\Gamma$. 

\begin{figure}[t!]
	\hbox{\hspace{+2 cm}\includegraphics[width=0.8\textwidth]{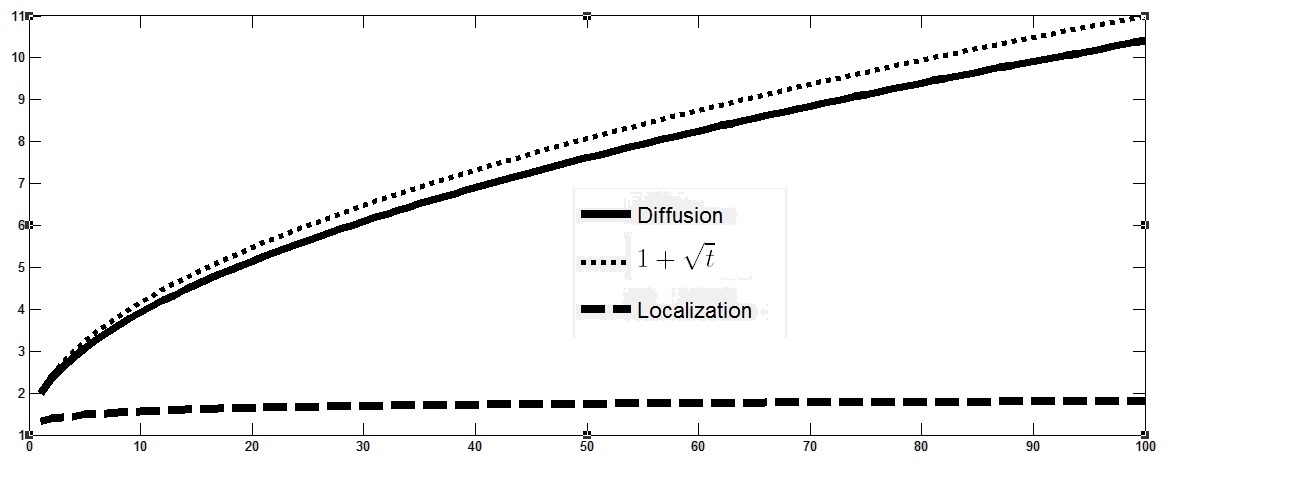}}
	\caption{Plot of $\Xi(\lambda)$  vs $t$ for the example (\ref{exampleAKLT}) with $N=100$ and $W=1$.}
	\label{diffloc}
	\vspace{-5mm}
\end{figure}

The probability distribution $p(k)$ can in principle be studied through by doing some tomography and next studying the $z$-transform of the 2-point correlation function. Keeping this idea in mind, it is acceptable to see the phase of the eigenvalues of the transfer matrix, as the momentum of a set of particles in the MPS-sea.
Clearly then diffusion arises when the eigenvalues are close to the unit circle. Next, local fluctuations increase the variance of the peaks. This is, again, can be understood from the spectrum, as this means that the eigenvalues are moved towards the origin.
The spectrum of generic transfer matrices have often a star-shape. This is mathematically understood for completely positive operators, as the eigenvalues on the unit circle are a representation of some abelian group \cite{Fannes,chapter 2: Perez MPS rep}. This fits perfectly with the interpretation of the MPS as a stationary sea. The total momentum is zero, and so the sum of the phase must be zero, as it is for any non-trivial one-dimensional representations of abelian groups.
In the extreme case of localization all eigenvalues are zero except one. remaining eigenvalue, which is 1, is the zero-momentum of the sea.

Let us illustrate this with an example,

\begin{example}
	\label{exampleAKLT}
	Consider the local tensors, 
	\begin{eqnarray*}
	A^{(-1)}(\lambda)+W^{(-1)}=\left(\sqrt{1-\lambda}+w_1/\sqrt{2}\right)\sigma^+,\\
	A^{(+1)}(\lambda)+W^{(+1)}=\left(\sqrt{1-\lambda}+w_1/\sqrt{2}\right)\sigma^-\\
	A^{(0)}(\lambda)+W^{(0)}=\left(\sqrt{\lambda}+w_1/2\right)\sigma^z+w_2/2 \mathbb{1}
	\end{eqnarray*}
	with $w_j \sim N(0,W)$ and mutually independent for some finite variance $W$. Here we take $\lambda\in [1/2,1]$, and see that AKLT belong to the family for $\lambda =2/3$. 
	By choosing the right function $\lambda(t):[1,\infty]\to [1/2,1]:t\to \frac{\sqrt{t}}{1+\sqrt{t}}$, we can plot $\Xi(\lambda,M)$ in figure (\ref{diffloc}).
	A simple calculation can verify, 
	$$\Xi(\lambda(t),W=0)\leq C'(1+\sqrt{t}),~~\Xi(\lambda(t),W=1)\leq C $$
\end{example}
As promised, we see in figure (\ref{diffloc}) that our choice $\lambda(t)$ yields diffusion. Similarly to Anderson's result, a finite magnitude of the fluctuation leads to localization.

As mentioned earlier, localization is especially interesting for studying mott insulators. Hence, we need to be able to extend the result to fermions.
This can of course be done for Fermionic MPS \cite{chapter 4: Kraus Fermion PEPS}, by replacing $\sigma^{+i_k}\leftrightarrow\psi^{\dagger i_k}$ with fermionic creation operators. The transfer matrix may vary depending on the local observable, $\Gamma=A^{0}\otimes \overline{A}^{0}+A^{1}\otimes \overline{A}^{1}$ or $\Gamma=A^{0}\otimes \overline{A}^{0}-A^{1}\otimes \overline{A}^{1}$. The rest of the set-up is then the same, this time however, we need to choose transfer matrix with the largest $\Xi(\lambda,M)$.

\section{Conclusion}
In this work, we proved that all injective Matrix Product States are the unique ground-states of hopping models.
 This was done  by studying the spectrum of parent Hamiltonians in the long range and system size limit, and eventual renormalization of the transfer operator of the Matrix Product ground state, under the restriction that an asymptotic regime remains valid for low particle densities 
We showed that, in this limit, the spectrum depends solely on the properties of a quantum channel, i.e. the transfer matrix. The derived formula for the low energy spectrum was very reminiscent of the K\"all\'en -Lehmann spectral representation in quantum field theory.
Particles types are represented by eigenvectors of the fourier transform of the transfer matrix (\ref{eq:FourierTransfer}). Higher energy wavefunction are constructed by putting these particles in the virtual space of the ground-state with different momenta. 

Secondly, we consider a renormalization of the spectrum of the transfer matrix with range $L$. The choice was made in such a way that the ground state becomes "critical". It results that the particles derived earlier are not necessarily stable.
The first quantization breaks down when a particle can be fused from two others, of which the respective eigenvalues are close to the unit circle.

This work introduces an interesting interpretation of  translational invariant Matrix Product States as nothing but stationary sea of particles scattering with one-another. This insight motivated us to understand Anderson localization on the level of Tensor Networks. The property is introduced to the Matrix Product State manifold without considering any Hamiltonian-dynamic. The local random potentials are exchanged with fluctuations of the local tensor of the network. The time-evolution is reinterpreted as transport along the manifold which moves the eigenvalues of the transfer towards the unit circle. This is shown under proper rescaling of the parameter with $t$ to be equivalent to diffusion. Finally, it is illustrated how finite magnitude of fluctuation kills the diffusion similarly to the result proven by Anderson.

\section*{Acknowledgements}
We thank Jutho Haegeman for helpful discussions.
We acknowledge financial support by the FWF project CoQuS No.  W1
210N1 and project QUTE No. H20ERC2015000801.
\newpage
\appendix
\section{Statements and Proofs of Lemma's and Theorem}
\label{sec: tech}
The technical details of the limit are presented in this section. We start from a one-dimensional spin chain of size $N$ with interactions of range $L$.
We derive a few convergence rates for $L\to \infty$ for the low particle number eigenstates and their energy under a possible choice of re-normalisation of the transfer matrix $\Gamma(L)$.

We need to compute the the action of the Hamiltonian onto a m-particle subspace. As, we see a very natural new basis appears in the virtual space. However, this basis has to be orthogonalized. We, therefore, will have to get some information about the Gram-matrix of each subspace.\newline
First, we start our discussion with results about the one-particle states. In Lemma (\ref{Lemma: one-particle}) we demonstrate that the particle have a simple finite-dimensional representation closely related to a quantum channel.
Next, when combining these particles to construct multi-particle states, it can be assumed that the particles are very far from each other. Since the dynamic is frustration free, this implies that the theory is non-interactive, and the total energy is the sum of the energies of the  individual particles.

\begin{Lemma}[One-particle energy]
\label{Lemma: one-particle}	
Given a Matrix Product State $|\phi\rangle$, the one-particle states with momentum $k$, $|\psi\{X_a,k\}\rangle$, given by, 
\begin{eqnarray}
\label{eq:one-particle}
|\psi\{X_a,k\} =\sum_n e^{ikn)}|\phi\{X_a,n_a\}\rangle \nonumber\\ 
|\phi\{X_a,n\}\rangle=\sum_{\vec{i}}\operatorname{Tr}\left(A^{i_1}\dots A^{i_{n-1}} A^{i_{n}}X_a A^{i_{n+1}}\dots A^{i_N}\right)|\vec{i}\rangle
\end{eqnarray}
are eigenvectors of the local Hamiltonian $H(L)$, (\ref{eq:parentH}) with as choice of for the tweaking matrix $C$ the relation (\ref{eq:Correction}),
$$\frac{\|H|\psi[X_a,k]\rangle -E[X_a,k]|\psi[[X_a],k]\rangle\|_2^2 }{\langle\psi[\overline{X_a},k]|\psi[X_a,k]\rangle} \to 0 $$
The values  $E[X_a,k]$ are the solutions of the finite-dimensional eigenvalue problem of the matrix,
$$\left(h_{(ab)}\right)=\frac{\operatorname{Tr}\left(\rho^{-1}\mathcal{T}_k[\tilde{X}_a^\dagger] \mathcal{T}_k[\tilde{X}_b]\right)}{\operatorname{Tr}\left(\tilde{X}_a^\dagger \mathcal{T}_k[\tilde{X}_a]\right)},X_a =\sum_b U_{a,b}\tilde{X}_b,~ U^\dagger U = \mathbb{1}$$
The basis vectors $\tilde{X}_a$ are the eigenvalues of the Fourier transform of the transfer matrix (\ref{eq:FourierTransfer}) ,
$$\mathcal{T}_k[\tilde{X}_a]= \lambda_a X_a$$
For momentum $k=0$, there is the additional constraint of orthogonality onto the vacuum,
$$Q^*_{\rho}\circ\mathcal{T}_k\circ Q_{\rho}[\tilde{X}_a]= \lambda_a X_a,~Q_{\rho}[.]=[.]-\rho$$
\end{Lemma}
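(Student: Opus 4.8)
The plan is to compute the action of the Hamiltonian $H(L)=\sum_j(\operatorname{Id}-h_{j,j+L})$ term by term on the one-particle ansatz and to organise the result by the position of the inserted operator $X_a$ relative to each interaction window. First I would fix the gauge so that $\Gamma$ is trace preserving with diagonal fixed point $\rho$, and replace the tweaking matrix by its limit $C\to\rho^{-1}\otimes\mathbb{1}$, which is legitimate by the convergence discussed after (\ref{eq:Correction}) as long as the subleading spectrum of $\Gamma$ stays strictly inside the unit circle. In this gauge a single term $h_{j,j+L}$ becomes, up to boundary corrections, the projector onto the MPS subspace on sites $j,\dots,j+L$; it therefore annihilates $(\operatorname{Id}-h_{j,j+L})|\phi\rangle$, and whenever the particle bond $n$ lies outside $[j,j+L]$ it acts as the identity and contributes nothing to $(\operatorname{Id}-h_{j,j+L})$.

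The decisive step is the bulk computation for $n\in[j,j+L]$. Contracting the MPS tensors into transfer matrices, the projector redistributes the inserted $X_a$ over every bond $m$ of the window, the operator content at $m$ being obtained from $X_a$ by propagation through $\Gamma^{|m-n|}$ on one side and $\Gamma^{*}$ on the other. Summing the momentum weights $\sum_n e^{ikn}$ over positions and over all windows $j$, the distance sums collapse into the geometric series $e^{ik}\Gamma/(\mathbb{1}-e^{ik}\Gamma)$ and its conjugate, which is exactly the Fourier transform $\mathcal{T}_k[\Gamma]$ of (\ref{eq:FourierTransfer}). Projecting the resulting vector back onto the one-particle manifold with the Gram metric $\langle\psi\{\overline{X}_b,k\}|\psi\{X_a,k\}\rangle$, whose leading $O(N)$ part is $N\operatorname{Tr}(\tilde{X}_b^{\dagger}\mathcal{T}_k[\tilde{X}_a])$, reproduces the finite-dimensional matrix $(h_{(ab)})$ in the statement; diagonalising $\mathcal{T}_k$ first, so that $\mathcal{T}_k[\tilde{X}_a]=\lambda_a\tilde{X}_a$, decouples the problem and yields the energies $E[X_a,k]$.

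The main obstacle is the error estimate: showing the defect vector $H|\psi\{X_a,k\}\rangle-E[X_a,k]|\psi\{X_a,k\}\rangle$ has squared norm negligible compared with $\langle\psi\{\overline{X}_a,k\}|\psi\{X_a,k\}\rangle\sim N$. Working, as in the statement, in the regime without renormalisation so that the subleading spectrum of $\Gamma$ is strictly inside the unit circle, the defect has two sources: the $O(L)$ interaction windows touching the chain boundary, where the projector identity fails, and the truncation of the distance sums at the window size $L$. The injectivity gap makes $\|\Gamma^{d}Q_\rho\|$ decay geometrically in $d$, so the truncation tail is exponentially small and the finite-$L$ correction $C-\rho^{-1}\otimes\mathbb{1}$ is controlled in the same way; the boundary then contributes a defect of squared norm $O(L)$. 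Since the norm grows like $N$, the ratio is $O(L/N)\to 0$ in the stated limit $L/N\to 0$.

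Finally, at $k=0$ the ansatz acquires a component along the vacuum $|\phi\rangle$, because the $n$-sum no longer carries an oscillating phase that suppresses the fixed-point direction of $\Gamma$. I would remove it by inserting the projector $Q_\rho[\cdot]=[\cdot]-\rho$ on both sides, replacing $\mathcal{T}_0$ by $Q_\rho^{*}\circ\mathcal{T}_0\circ Q_\rho$ and thereby enforcing orthogonality to the ground state; the remainder of the argument then carries over without change.
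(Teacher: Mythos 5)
Your overall skeleton matches the paper's proof: fix the gauge, replace the tweaking matrix by its limit $\rho^{-1}\otimes\mathbb{1}$ with an error controlled by the injectivity gap, collapse the position and window sums into the geometric series that builds $\mathcal{T}_k[\Gamma]$, orthonormalize with the Gram matrix $N\operatorname{Tr}\left(X_a^\dagger\mathcal{T}_k[\Gamma][X_b]\right)$, and impose $\operatorname{Tr}\left(\rho X_b\right)=0$ at $k=0$. However, there is a genuine gap in the central step. The image of $h_{j,j+L}$ is not a redistribution of $X_a$ ``over every bond $m$ of the window'': the projector maps the window onto states that are of pure MPS form \emph{inside} the window, with all operator content pushed to its two edges. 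Acting on $|\phi\{X_a,n\}\rangle$ with $n$ in the window it therefore produces, besides the one-particle part that reassembles into $|\psi\{\mathcal{H}_k[X_a],k\}\rangle$, states carrying a pair of traceless insertions at both edges, $|\phi\{e_{\alpha},j;e_{\beta},j+L\}\rangle$ --- the third term of the paper's equation (\ref{eq:one-particle action}). These are not one-particle states, so your computation as written does not close on the one-particle manifold, and your defect estimate never accounts for them. The heart of the paper's proof is precisely bounding these leakage coefficients $\operatorname{Tr}\left(e_{\alpha}^\dagger \mathcal{C}\{X_a,k\}[e_{\beta}]\right)$: since $\mathcal{C}\{X_a,k\}$ contains $\Gamma^{j'}\circ\tilde{L}_{X_a}\circ\Gamma^{L-j'}$ with $Q_\rho$-projected arguments, propagation across the whole window forces a decay $O(D^2|\lambda|^L)$. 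This term is not a technicality: it is exactly the quantity $\epsilon_1$ of (\ref{eq:error1Part}) whose failure to vanish under renormalization of $\Gamma(L)$ produces the bound states discussed later in the paper, so a proof that omits it misses what the lemma is actually for.

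Your two claimed error sources are also partly off target. The state is defined with a trace, i.e.\ on a ring, so there are no ``interaction windows touching the chain boundary'' and no $O(L)$ boundary defect; consequently the ratio in the lemma does not vanish like $L/N$ but exponentially in $L$, uniformly in the system size --- the lemma needs only $L\to\infty$ with the subleading spectrum of $\Gamma$ strictly inside the unit circle, while the thermodynamic condition $L/N\to 0$ only enters later, in Lemma (\ref{Lemma: Asymptotic Regime}) and Theorem (\ref{Theorem: free Fermions}). Once you add the two-insertion leakage term and bound it by the same gap mechanism you already invoke for the truncation and for $C-\rho^{-1}\otimes\mathbb{1}$, the rest of your argument (Gram matrix, diagonalization of $\mathcal{T}_k$, and the $k=0$ replacement of $\mathcal{T}_0$ by $Q^*_{\rho}\circ\mathcal{T}_0\circ Q_{\rho}$) goes through essentially as in the paper.
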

\begin{proof}
	By our choices given by the equations (\ref{eq:parentH}) and (\ref{eq:Correction}), the Hamiltionan $H(L)$ is local and frustration free.
	Let us fix the momentum $k$.
	Our first step is to show that the one-particle subspace spanned by states of the type $|\psi\{X_a,k\}\rangle$ is closed under the action of $H(L)$, in the limit $L\to \infty$.
	Knowing this, we can proceed and find an orthogonal basis for this subspace, and compute the effective Hamiltonian. 
	
	A look at the choice of corrections for $H(L)$ to be frustration free, yields that in the limit $L\to \infty$ all these corrections should converge,
	$$ C \to \rho^{-1}\otimes \mathbb{1},~ L\to \infty$$
	Using the frustration freeness of the Hamiltonia we can bound this first error,
	
	$$\Big|\Big|H|\psi\{X_a,k\}\rangle-\sum_{j, j\leq n \leq j+L}e^{ikn} \tilde{H}_{j,j+L}|\phi\{X_a,n\}\Big|\Big|_2 \leq \epsilon_1(L)$$
		where the correction term of the interactions terms $\tilde{H}_{j,j+L}$ are given by $\rho^{-1}\otimes \mathbb{1}$. This error typically decays as $O(D^2 |\lambda|^L)$, with $\lambda$ the second largest eigenvalue of the transfer matrix $\Gamma$.
		
	This simplifications leads to the actions,
	\begin{eqnarray}
	\label{eq:one-particle action}
	\sum_{j, j\leq n \leq j+L}e^{ikn} \tilde{H}_{j,j+L}|\phi\{X_a,n\}\rangle =2L|\psi\{X_a,k\}\rangle-2|\psi\{\mathcal{H}[X_a],k\}\rangle \nonumber \\  -\sum_{j,\alpha,\beta}e^{ik j}|\phi\{e_{\alpha},j,e_{\beta},j+L\}\rangle \operatorname{Tr}\left(e_{\alpha}^\dagger \mathcal{C}\{X_a,k\}[e_{\beta}]\right)
	\end{eqnarray}
	with, 
	\begin{eqnarray*}
	\mathcal{H}_k[.]=R_{\rho^{-1}}\circ\mathcal{T}^{(L)}_k[.],\\
	\mathcal{C}\{X_a,k\}[.]=\sum_{j=1}^{L-1} e{ikj}R_{\rho^{-1}}\circ \Gamma^{j}\circ  \tilde{L}_{X_a}\circ \Gamma^{L-j} [.],\\
	 R_{\rho^{-1}}[.]=[.]\rho^{-1},~Q_{\rho}[.]=[.]-\rho \operatorname{Tr}[.],~~\tilde{L}_{X}[.]=Q_\rho[X_a Q_\rho[.]]
	\end{eqnarray*}
	An important super-operator is the Fourier Transform which is here approximated by,
	
	$$\mathcal{T}^{(L)}_k[\Gamma]=R_{\rho}+\frac{1}{2}\left( \sum_{n=1}^{L-1}  \Gamma^n\circ  \mathcal{R}_{\rho}\ e^{ikn}
	+\mathcal{R}_{\rho}\circ \Gamma^{* n}\ e^{-ikn}\right)$$
	The first line in the equation (\ref{eq:one-particle action}), is an effective actions on the finite-dimensional subspace spanned by $X_a$.
	However, the states $|\psi\{X_a,k\}\rangle$ are not necessarily orthonormal. This is sorted out by computing the Gram-matrix of the $1-$particle subspace,
	$$G^{(1)}_k(X_a,X_b)= N\langle\psi[\overline{X_a},k]|\psi[X_b,k]\rangle = \operatorname{Tr}\left(X_a^\dagger	\mathcal{T}^{(N)}_k[\Gamma][X_b]\right)$$
	The super-operator $\mathcal{T}_k[\Gamma]$ is hermitian and has thus orthonormal eigenvectors. Using the eigenvectors, we can construct the effective one-particle Hamiltonian presented in the statement of the Lemma.
	One should note that the identity particle $X_a=\mathbb{1}$ with momentum $k=0$ is nothing but the ground-state. However, the particle is not an eigenvector for $k\not=0$. As all one-particle states need to be orthogonal onto the vacuum state, for $k=0$, the additional constraint for $k=0$ has to be imposed,
	$$\langle \psi|\psi[X_b,k]\rangle =\operatorname{Tr}\left(\rho X_b\right)=0 $$
	
	We should finally discuss the error of the action,
	
	\begin{eqnarray*}
	\frac{\|H|\psi[X_a,k]\rangle -\lambda|\psi[[X_a],k]\rangle\|_2^2 }{\langle\psi[\overline{X_a},k]|\psi[X_a,k]\rangle}
	\\
	= \frac{N}{N \operatorname{Tr}\left(X_a^\dagger	\mathcal{T}^{(N)}_k[\Gamma][X_a]\right) } \sum_{\alpha_1,\alpha_2,\beta_1,\beta_2,j} \langle\phi[e_{\alpha_1},0,e_{\beta_1},L]|\phi[e_{\alpha_2},j,e_{\beta_2},j+L]\rangle \\
	\overline{\operatorname{Tr}\left(e_{\alpha_1}^\dagger \mathcal{C}\{X_a,k\}[e_{\beta_1}]\right)} \operatorname{Tr}\left(e_{\alpha_2}^\dagger \mathcal{C}\{X_a,k\}[e_{\beta_2}]\right)
	\end{eqnarray*}
	
	Typically, this error dies off as $O(D^2|\lambda|^L)$. As all the errors are being reduced exponentially with $L$ without any dependence on the system size, the claim of our lemma is proven.
	  
\end{proof}
In the theory of scattering one compares the change of the wavefunction of two particles initially infinitely far from each other, with the resulting outcome after infinite time when they are again far apart. The total energy of the system, since both particles are uncorrelated in both cases, is the sum of the individual energy. 
This additive feature of the energy appears regularly in exactly solvable models or even integrable ones. The particles simply fill in some Fermi sea.

We prove in the next lemma, that this is also exactly what happens for parent Hamiltonians in the long range and large system size limit. The intuition is that the action of the Hamiltonian onto a m-particle wavefunction, approximately only evaluates the energy when the particles are very far apart. 
Clearly for such asymptotic regime to be valid, particles need enough empty space. Thus, there is a restriction on the particle number which scales with $L$.
\begin{Lemma}[Asymptotic Regime at Low Densities]
\label{Lemma: Asymptotic Regime}	
For a m-particle state with particles satisfying the conditions of Lemma (\ref{Lemma: one-particle}),
\begin{eqnarray}
\label{eq:m-particle}
|\psi\{X_\alpha,k_\alpha\}_{\alpha=1}^m\rangle =\sum_{n_1<\ldots<n_m} e^{i\vec{k}.\vec{n}}|\phi\{X_\alpha,n_\alpha\}\rangle \nonumber\\ 
|\phi\{X_\alpha,n_\alpha\}_{\alpha=1}^m\rangle=\sum_{\vec{i}}\operatorname{Tr}\left(A^{i_1}\ldots A^{i_{n_1}}X_1A^{i_{n_1+1}}\ldots A^{i_{n_m}}X_1A^{i_{n_m+1}}\ldots A^{i_N}\right)|\vec{i}\rangle
\end{eqnarray}
the action of the Hamiltonian takes into consideration the individual particles independently from the others,
\begin{eqnarray}
H|\psi\{B^{(i)}_\alpha,k_\alpha\}_{\alpha=1}^m\rangle &\approx 2m L |\psi\{B^{(i)}_\alpha,k_\alpha\}_{\alpha=1}^m\rangle + 2  |\psi\{\mathcal{H}[X_1],k_1;X_2,k_2;\ldots;X_m,k_m\}\rangle \nonumber\\
&+ \ldots+ |\psi\{X_1,k_1;\ldots;X_{m-1},k_{m-1};\mathcal{H}[X_m],k_m\}\rangle \nonumber
\end{eqnarray}
The error scales typically as $O\left(\sum_{n=1}^m(m-n+1)\left(\begin{array}{c}L\\ n\end{array}\right)N^{-n}D^{2(n+1)}\right)$	
\end{Lemma}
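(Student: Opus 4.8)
The plan is to expand $H=\sum_j(\operatorname{Id}-h_{j,j+L})$ term by term against the dilute ansatz (\ref{eq:m-particle}) and to classify each interaction window $[j,j+L]$ by the number $n$ of insertions $X_\alpha$ that fall inside it, for each ordered configuration $n_1<\ldots<n_m$ appearing in the sum. The entire content of the lemma is then that windows touching at most one particle reproduce the independent single-particle dynamics of Lemma (\ref{Lemma: one-particle}), while windows that straddle two or more particles are suppressed by the low density.

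First I would dispose of the empty windows. Whenever $[j,j+L]$ contains no insertion, the local tensors inside it are precisely those of the background MPS, so by the defining frustration-free property of (\ref{eq:parentH})--(\ref{eq:Correction}) the term $(\operatorname{Id}-h_{j,j+L})$ annihilates that configuration and contributes nothing; this is exactly the statement that the Hamiltonian is inert away from the particles. Next, for a window containing a single insertion $X_\alpha$, the other $m-1$ particles lie outside its range and are felt by $h_{j,j+L}$ only through the background transfer matrix $\Gamma$ (\ref{eq:Transfer}); up to an exponentially small decorrelation error controlled by the spectral gap $|\lambda|<1$ of $\Gamma$ (using uniqueness and non-degeneracy of the fixed point $\rho$), the action coincides with the one-particle action of Lemma (\ref{Lemma: one-particle}) applied to $X_\alpha$ alone. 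Summing over all windows hosting a given $X_\alpha$ reproduces the $2L|\psi\rangle$ and $\mathcal{H}[X_\alpha]$ pieces of (\ref{eq:one-particle action}), and summing over $\alpha=1,\ldots,m$ assembles the announced main term $2mL|\psi\rangle+2\sum_\alpha|\psi\{\ldots;\mathcal{H}[X_\alpha];\ldots\}\rangle$. The key point earning the phrase \emph{asymptotic regime} is that the spectator particles neither shift the energy nor entangle with the transformed one.

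The hard part is bounding the residual from windows hosting $n\ge 1$ particles in a way that does not factorize, and the crux is purely combinatorial normalisation. I would first show that the $m$-particle Gram matrix $G^{(m)}_k$ decouples into a product of one-particle Gram matrices $G^{(1)}_k$ up to corrections of the same order, again via exponential decorrelation of $\Gamma$; this fixes the overall norm at $\sim N^m$. The number of ordered configurations placing $n$ of the $m$ insertions inside a fixed length-$L$ window is $\binom{L}{n}$, the remaining $m-n$ spectators still range over $\sim N$ sites, and a factor $(m-n+1)$ counts the placement of the offending window among the particle gaps; each contraction of those $n$ insertions together with the tweaking matrix $C$ is bounded in operator norm by $D^{2(n+1)}$. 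Dividing the resulting $\binom{L}{n}N^{m-n}$-type weight by the $N^m$ normalisation yields the low-density factor $\binom{L}{n}N^{-n}$, so collecting $n=1,\ldots,m$ gives the stated bound
$$O\!\left(\sum_{n=1}^{m}(m-n+1)\binom{L}{n}N^{-n}D^{2(n+1)}\right),$$
where the $n=1$ term is the leading finite-size correction to treating the particles as independent and the terms $n\ge 2$ are the genuine multi-body collisions. This is small precisely when $L/N\to 0$ at fixed $m$, i.e. at low particle density. The main obstacle I anticipate is proving that $G^{(m)}_k$ truly separates into one-particle blocks, since the $N^{-n}$ suppression would be spoiled by cross-terms were the gap of $\Gamma$ not available to force exponential decay of the inter-particle correlations.
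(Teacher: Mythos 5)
Your skeleton — classify each window $[j,j+L]$ by the number $n$ of insertions inside it, kill the empty windows by frustration freeness, reduce single-insertion windows to Lemma (\ref{Lemma: one-particle}), and suppress collisions by low-density counting — is the same as the paper's, which carries it out via an explicit two-particle computation and then regroups the general case into operators $\mathcal{X}_{n_1,n_2}$ containing at most $\left(\begin{array}{c}L\\ n\end{array}\right)D^{2(n+1)}$ terms. But there is a genuine gap in your treatment of the collision windows, and it is exactly where the content of the lemma lies. A window covering two insertions does not only produce a "multi-body residue" bounded by operator norms times configuration counting: contracting such a window against $C\to\rho^{-1}\otimes\mathbb{1}$ also produces \emph{fusion} terms, i.e.\ coherent states with one particle fewer in which the pair has merged into the single insertion $\tau_{k_1}[X_{a_1}]X_{a_2}$, with $\tau_k[\cdot]=\sum_{j=1}^{L}e^{-ikj}\Gamma^j[\cdot]$ — these are lines (\ref{eq: 2-part LINE3}) and (\ref{eq: 2-part LINE6}) of the paper's proof. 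Their coefficients are not exponentially small (no $Q_\rho$ projections occur, so $\tau_k$ is $O(L)$ in norm), and because the output lives in the $(m-1)$-particle sector, whose states have norm $\sim N^{(m-1)/2}$ against the $N^{m/2}$ normalisation, the suppression is only a single power of $N^{-1/2}$ per fused pair. Your counting instead assigns all $n=2$ window output a weight $\left(\begin{array}{c}L\\ 2\end{array}\right)N^{-2}D^{6}$, which underestimates the true contribution (of order $L^2D^4/N$ in norm squared) by a full factor of $N$. Handling this requires precisely what the paper does: organize the error terms into coherent lower-particle-number states and invoke the cross-sector overlap scaling $O(N^{(m_1+m_2)/2})$ to convert coefficient counts into norm bounds; it cannot be obtained from per-configuration operator-norm bounds alone.

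A symptom of this mis-attribution is that your argument is internally inconsistent: you first (correctly) argue that single-insertion windows with all spectators outside their range reproduce the one-particle action up to errors $O(D^2|\lambda|^L)$ — these are lines (\ref{eq: 2-part LINE4}) and (\ref{eq: 2-part LINE5}) of the paper, which decay exponentially in $L$ — yet your final bound assigns the $n=1$ sector a polynomial error $LD^4/N$, which shows the formula was matched to the statement rather than derived. In the paper, the polynomial error at that order comes from two sources you never identify: the fusion outputs of two-particle windows just described, and the term $\epsilon_*$ arising because, on configurations where two insertions lie within distance $L$, the identity parts of the interaction terms are miscounted relative to the claimed coefficient $2mL$ (a deficit $2(|n_1-n_2|-L)$ per close configuration, of order $O(D^4L^2/N)$ after normalisation). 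Your closing concern about factorizing the $m$-particle Gram matrix is legitimate but is the lesser issue; the paper defers it to the theorem, where it is resolved by passing to the gauge-transformed tensors $B^{(i)}_a$. To repair your proof, follow the paper's route: compute the two-particle action explicitly, isolate the fusion terms and $\epsilon_*$, and only then perform the combinatorial regrouping.
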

\begin{proof}
	The proof is more cumbersome in its notation than complex. As a first illustration, let us first look at what happens for 2 particles-basis states.
	One can verify,
	\begin{eqnarray}
	& H|\psi[X_{a_1},k_1;X_{a_2},k_2]\rangle=\left(4L|\psi[X_{a_1},k_1;X_{a_2},k_2]\rangle\right.\label{eq: 2-part LINE1}\\
	&\left.-|\psi[\mathcal{H}_{k_1}[X_{a_1}],k_1;X_{a_2},k_2]\rangle-|\psi[X_{a_1},k_1;\mathcal{H}_{k_2}[X_{a_2}],k_2]\rangle\right)\label{eq: 2-part LINE2}\\
	&-|\psi[\tau_{k_1}[X_{a_1}]X_{a_2},k]\rangle-|\psi[X_{a_1}\tau_{k_2}^*[X_{a_2}],k]\rangle \label{eq: 2-part LINE3}\\
	&-\sum_{\alpha,\beta}\sum_{0\leq j_1<j_2,|j_1-j_2|> L}e^{i\left[k_1 j_1+k_2 j_2\right]}|\phi[e_{\alpha},j_1;e_{\beta},j_1+L;X_{a_2},j_2]\rangle \operatorname{Tr}\left(e_{\alpha}^\dagger \mathcal{C}\{X_{a_1},k_1\}[e_{\beta}]\right)\label{eq: 2-part LINE4}\\
	&-\sum_{\alpha,\beta}\sum_{0\leq j_1<j_2,|j_1-j_2|> L}e^{i\left[k_1 j_1+k_2 j_2\right]}|\phi[X_{a_1},j_1;e_{\alpha},j_2,e_{\beta},j_2+L]\rangle \operatorname{Tr}\left(e_{\alpha}^\dagger \mathcal{C}\{X_{a_2},k_2\}[e_{\beta}]\right)\label{eq: 2-part LINE5}\\
	&-\sum_{j,\alpha,\beta}e^{i(k_1+k_2) j}|\phi[e_{\alpha},j,e_{\beta},j+L]\rangle \operatorname{Tr}\left(e_{\alpha}^\dagger \mathcal{C}^{(2)}\{X_{a_1},k_1;X_{a_2},k_2\}[e_{\beta}]\right)+\epsilon_*\label{eq: 2-part LINE6}
	\end{eqnarray}
	with, 
	\begin{eqnarray*}
	&\mathcal{C}^{(2)}\{X_{a_1},k_1;X_{a_2},k_2\}[.]=\sum_{0< j_1<j_2\leq L}e^{i \left[k_1 j_1+k_2 j_2 \right]}R_{\rho^{-1}}\circ \Gamma^{j_1}\circ \tilde{L}_{X_{a_1}}\circ \Gamma^{j_2-j_1}\circ \tilde{L}_{X_{a_2}}[.], \\&\tau_k[.]=\sum_{j=1}^Le^{-ikj}\Gamma^j[.]
	\end{eqnarray*}
	The lines (\ref{eq: 2-part LINE1}-\ref{eq: 2-part LINE2}) are the sought results for the 2-particles. The line (\ref{eq: 2-part LINE3}) results from the one-particles being transported due to the interactions and fusing with the other particle. Lines (\ref{eq: 2-part LINE4}-\ref{eq: 2-part LINE5}) are the errors of the one-particle approximation, we argued in the previous lemma. The last term (\ref{eq: 2-part LINE6}) results from the simultaneous action of the interaction on both particles. We have already demonstrated that  (\ref{eq: 2-part LINE4}- and (\ref{eq: 2-part LINE5}) decay exponentially with $L$. Hence, only (\ref{eq: 2-part LINE3}) and (\ref{eq: 2-part LINE6}) are new.
	One should notice that when normalized, the overlap of $m_1$ with $m_2$ particle states scales as,
	$$ |\langle\psi\{X_{\alpha},k_\alpha\}_{\alpha=1}^{m_1}|\psi\{X_\alpha,k_\alpha\}_{\alpha=1}^{m_2}\rangle| = O(N^{\frac{n_1+n_2}{2}})$$
	This is elaborated in more details in the proof of the theorem.
	The result is similar for the state in (\ref{eq: 2-part LINE6}) and taking it to be similar to a one-particle state.
	Looking specifically at each line. There are at most two new 1-particle states in the second line. Each new one-particle state is the result of $L D^2$ other terms.
	The argumentation is similar for (\ref{eq: 2-part LINE6}). There are $D^{6}L(L-1)$ contribution to this one-particle state.
	We omitted the error terms $\epsilon_*$ which come from the local identity operation of the interactions.
	
	$$\epsilon*= \sum_{n_1} 2(|n_1-n_2|-L)e^{i(k_1n_1+k_2n_2)}\sum_{1\leq n_2-n_1 \leq L } |\phi\{X_{a_1},n_1;X_{a_2},n_2\}$$
	Under normalization with $\langle\psi[X_{a_1},k_1;X_{a_2},k_2]|\psi[X_{a_1},k_1;X_{a_2},k_2]\rangle$, this error is of order $O(D^4L^2/N)$.
	
	With this in mind, we can tackle the m-particle case and regroup the different error types together,
	\begin{eqnarray}
&\Big|\Big|H|\psi\{B^{(i)}_\alpha,k_\alpha\}_{\alpha=1}^m\rangle -\left( 2m L |\psi\{B^{(i)}_\alpha,k_\alpha\}_{\alpha=1}^m\rangle + 2  |\psi\{\mathcal{H}[X_1],k_1;X_2,k_2;\ldots;X_m,k_m\}\rangle \right.\nonumber\\
&\left.+\ldots + |\psi\{X_{1},k_1;\ldots;X_{m-1},k_{m-1};\mathcal{H}[X_m],k_m\}\rangle \right)\Big|\Big|_2\nonumber\\
&\leq \| |\phi\{\mathcal{X}_{1,1}\{k_1\},X_2,k_2;\ldots;X_m,k_m\}\rangle+\ldots+|\psi\{X_2,k_2;\ldots;X_{m-1},k_{m-1};\mathcal{X}_{1,m}\{k_m\}X_1,k_1\}\rangle \|\nonumber\\
 &+ \| |\zeta\{\mathcal{X}_{2,1}\{k_1,k_2\};X_3,k_3;\ldots;X_m,k_m\}\rangle+\ldots\nonumber\\
 &+|\zeta\{X_2,k_2;\ldots;X_{m-2},k_{m-2};\mathcal{X}_{2,m-1}\{k_{m-1},k_m\};X_1,k_1;\}\rangle \|\nonumber\\
 &+\ldots+ \| |\zeta\{\mathcal{X}_{m-1,1}\{k_1,k_2,..,k_{m-1}\},X_m k_m\}\rangle+\ldots+|\psi\{\mathcal{X}_{m-2,2}\{k_{2},..,k_m\};X_1,k_1\}\rangle \|\nonumber\\
  &+ \| |\zeta\{\mathcal{X}_{m,1}\{k_1,k_2,..,k_{m}\}\rangle \| \nonumber +\epsilon_*
	\end{eqnarray}
	The operator $\mathcal{X}_{n_1,n_2}$ consists of at most $\left(\begin{array}{c}L\\ n\end{array}\right)D^{2(n+1)}$ terms. Under the states  $|\zeta\{\mathcal{X}\{q_1,\ldots,q_{m_1}\};X_1,k_1;\ldots;X_{m_2},k_{m_2}\}\rangle$, one should understand,
	\begin{eqnarray}
|\zeta\{\mathcal{X};X_1,k_1;\ldots;X_n,k_n\}\rangle \nonumber \\
 = \sum_{n_1+L\leq\ldots\leq n_{m_1+1},\alpha,\beta}e^{i(q_1+\ldots+q_{m_1}) n_1}|\phi[e_{\alpha},e_{\beta},n_1+L,X_1,n_2,\ldots,X_{m_2},n_{m_2+1}]\rangle \operatorname{Tr}\left(e_{\alpha}^\dagger \hat{\mathcal{X}}[e_{\beta}]\right) \nonumber
   \end{eqnarray}
with bounded super-operator $\hat{\mathcal{X}}$.
Again the error $\epsilon_*$ is shown to be,
	$$\epsilon*= \sum_{n_1} (|n_1-n_2|+|n_{m}-n_{m-1}|-2mL)e^{i(k_1n_1+k_2n_2)}\sum_{n_1 <\ldots < n_m,  |n_1-n_m|\leq L} |\phi\{X_{a_1},n_1;X_{a_2},n_2\}$$
This yields an error of this order $O(mD^{2m}L^{m+1}/N^m)$
Similarly to our illustration the overlap of the states $\psi$ and $\zeta$ are bounded as some power of $N$. Summing over all contributions and taking into account the normalization of the m-particle state yields the claim.
\end{proof}
This Lemma mathematically implies that the Hamiltonian decouples for this basis set. In Quantum Mechanics, the stationary physical states, i.e. eigenvectors of some Hamiltonian, are orthonormal. Thus, our final step is to orthonormalize this basis. This seems quite challenging at first. We could of course use a Gram-Schmidt procedure to orthogonalize each m-particle subspace. However, this does not solve the orthogonality between each subspace. The representation used previous lemma seems too weak to finalize the result. One should for each m-particle eigenvector take a linear superposition of not only m-particle states but also 1,2,...,m-1. Solving this problem seems very hard at first.
 However, nothing forbids us to still play with another representation of the particle tensor $B^{(i)}$.
It turns out that this is the key for completing the construction which we summarize in the following theorem.
\begin{theorem}
	\label{Theorem: free Fermions}
Every injective Matrix Product State is the ground state of a hopping theory which depends solely on the transfer matrix of the Matrix Product State.
The Hamiltonian is the result of taking a long range and large system size limit of the parent Hamiltonian (\ref{eq:parentH}) with correction matrix (\ref{eq:Correction}).
A set of m-particle eigenstates in the limit, are given by the symmetric super-positions,
\begin{eqnarray}
|\psi\{X_{a_1},\ldots, X_{a_m};k_1,\ldots,k_m\}\rangle \nonumber \\
= \sum_{P\in \mathcal{S}_n}\sum_{n_1 < \ldots< n_m} e^{i(k_{P(1)}n_1+\ldots+k_{P(m)} n_m)} |\phi[B_{P(a_1)},n_{1};\ldots; B_{P(a_m)},n_{a_m}]\rangle \nonumber
\end{eqnarray}
summed over all permutations $P$.
The basis states $ |\phi[B_{a_1},n_{1};\ldots; B_{a_m},n_{a_m}]$ are given by,
\begin{eqnarray*}
&|\phi\{B^{(i)}_\alpha,j_\alpha\}_{\alpha=1}^m\rangle=\sum_{\vec{i}}\operatorname{Tr}\left(A^{i_1}\dots A^{i_{j-1}} B^{i_{n_1}}A^{i_{j+1}}\right. 
\left.\dots  B^{n_{j_m}}A^{i_{j+m}}\dots A^{i_N}\right)|\vec{i}\rangle
\end{eqnarray*}
Each tensor $B_{a_1}$ is a function of $X_a$ and $k_a$,
\begin{eqnarray*}
B^{(i)}_a=A^{(i)}X_a + e^{-i k_a}Y_aA^{(i)} - A^{(i)}Y_a,~Y_a =  (\operatorname{Id}-e^{-ik}\Gamma^*)^{-1}[X_a]
\end{eqnarray*}
The energy of each m-particle eigenvectors is the sum of the energy of the individual particle,
 \begin{eqnarray}
 \label{eq:Bethe}
 H|\psi\{X_{a_1},\ldots, X_{a_m};k_1,\ldots,k_m\}\rangle =  \sum_{j} E[X_{a_j},k_j]|\psi\{X_{a_1},\ldots, X_{a_m};k_1,\ldots,k_m\}\rangle\rangle
 \end{eqnarray}
 The particle representations $X_a$ and their respective energy are given in Lemma (\ref{Lemma: one-particle}).
 
 The states and respective energy of the theory are valid for a low-particle density which satisfies,
 $$O(D^2 |\lambda|^L)\to 0,~ O\left(\sum_{n=1}^m(m-n+1)\left(\begin{array}{c}L\\ n\end{array}\right)N^{-n}D^{2(n+1)}\right)\to 0,~ O(m!N^{-m})\to 0 $$
\end{theorem}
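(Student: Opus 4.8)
The plan is to assemble the two preceding lemmas and then overcome the orthonormalization problem by a judicious choice of representation for the particle tensors. First I would invoke Lemma~(\ref{Lemma: Asymptotic Regime}) to establish that, on the naive $m$-particle basis, the Hamiltonian acts additively up to the three error terms recorded there. The only contributions obstructing an exact eigenvalue equation are the transport-and-fusion term (line~(\ref{eq: 2-part LINE3}) in the two-particle illustration) and the simultaneous-action term (line~(\ref{eq: 2-part LINE6})); everything else is either the desired additive action or an exponentially small remainder controlled by Lemma~(\ref{Lemma: one-particle}).

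The decisive step is to exploit the one-particle gauge freedom~(\ref{eq: main gauge}) and pass to the improved tensor
\begin{eqnarray*}
B^{(i)}_a=A^{(i)}X_a + e^{-i k_a}Y_aA^{(i)} - A^{(i)}Y_a,\quad Y_a =  (\operatorname{Id}-e^{-ik}\Gamma^*)^{-1}[X_a].
\end{eqnarray*}
Since $Y_a$ is exactly the resolvent $\sum_{j\geq 0}(e^{-ik}\Gamma^*)^j[X_a]$ whose truncation enters $\tau_k$, the extra $Y$-terms are engineered so that the hopping action on $Y_aA^{(i)}$ and $A^{(i)}Y_a$ cancels the fusion term~(\ref{eq: 2-part LINE3}) to leading order. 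I would verify this cancellation explicitly for two particles and then argue that the same telescoping mechanism removes the analogous couplings in the $m$-particle case, leaving only the genuinely additive single-particle action.

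Next I would symmetrize over all permutations $P\in\mathcal{S}_m$. Imposing $A(P)=A(T_{\alpha,\beta}P)$ promotes the decoupled single-particle actions into the full eigenvalue equation~(\ref{eq:Bethe}), the energy summing over the one-particle values $E[X_{a_j},k_j]$ of Lemma~(\ref{Lemma: one-particle}); the bosonic symmetrization is precisely what is selected by frustration-freeness in the asymptotic regime, in the spirit of the coordinate Bethe ansatz, since the residual scattering cannot shift the total energy.

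The main obstacle, and the step demanding the most care, is controlling the Gram matrix both within a fixed sector and across sectors of different particle number. I would show that after symmetrization the diagonal block dominates while the cross-sector overlaps are suppressed by powers of $N$, so that the symmetric states are asymptotically orthonormal and the leakage into lower-particle sectors vanishes. The combinatorial factor $m!$ from the permutation sum, weighed against the per-term $N^{-1}$ overlap decay, is exactly what produces the condition $O(m!\,N^{-m})\to 0$; combining this with the two error scales inherited from Lemmas~(\ref{Lemma: one-particle}) and~(\ref{Lemma: Asymptotic Regime}) yields the stated low-density validity regime and completes the construction.
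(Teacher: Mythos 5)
Your proposal follows the same skeleton as the paper's proof --- Lemma~(\ref{Lemma: one-particle}) for the single-particle data, Lemma~(\ref{Lemma: Asymptotic Regime}) for the additive action, the gauge-transformed tensors $B^{(i)}_a$, symmetrization over permutations, and a Gram-matrix estimate producing the $O(m!\,N^{-m})$ condition --- but it assigns the gauge choice a role it does not play in the paper. In the paper, the fusion term~(\ref{eq: 2-part LINE3}) and the simultaneous-action term~(\ref{eq: 2-part LINE6}) are never cancelled: they are simply part of the error bound of Lemma~(\ref{Lemma: Asymptotic Regime}), suppressed by the low-density powers $N^{-n}$, and they vanish under the theorem's stated validity conditions. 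What the choice $Y_a=(\operatorname{Id}-e^{-ik}\Gamma^*)^{-1}[X_a]$ actually buys is the exact algebraic identity
\begin{equation*}
\sum_{i}\overline{A}^{(i)}\otimes B^{(i)}_a=\sum_{i}\overline{B}^{(i)}_a\otimes A^{(i)}=0,
\quad\mbox{equivalently}\quad X_a+e^{-ik}\Gamma^*[Y_a]-Y_a=0,
\end{equation*}
which makes the overlaps computable: it forces basis states of different particle number to be orthogonal up to $O\bigl(N^{\min\{m_1,m_2\}-(m_1+m_2)/2}\bigr)$, makes the symmetrized states orthonormal up to $O(m!/N^m)$, and thereby lets the effective Hamiltonian factorize, $H^{(m)}\{k_1,\ldots,k_m\}=\sum_j H^{(1)}\{k_j\}$, through a matrix-element computation in that orthonormal basis. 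Your ``decisive step'' --- an explicit leading-order cancellation of~(\ref{eq: 2-part LINE3}) by the hopping action on $Y_aA^{(i)}$ and $A^{(i)}Y_a$ --- is therefore both unnecessary (Lemma~(\ref{Lemma: Asymptotic Regime}) already disposes of that term) and doubtful as stated: the fusion term is built from truncated powers of $\Gamma$ applied to the hopping particle at its momentum, whereas $Y_a$ is a resolvent of $\Gamma^*$ at the particle's own momentum, so no exact telescoping against~(\ref{eq: 2-part LINE3}) is available, and the claim that ``the same mechanism'' removes the simultaneous-action term~(\ref{eq: 2-part LINE6}) is even less plausible, since that term involves both particles at once and has no relation to the one-particle gauge freedom. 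Because your final paragraph does carry out the cross-sector Gram-matrix control --- which is the paper's actual core step --- the proposal survives the failure of the cancellation claim; but the weight you place on that cancellation should be shifted onto the orthogonality identity above, which is what the gauge choice is really for.
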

\begin{proof}
Let us first fix $m$, and the particles representations $X_{a_1},\ldots,X_{a_m}$ and their respective momenta $k_1,\ldots,k_m$ as given by Lemma (\ref{Lemma: one-particle}).
One should first notice the following facts.
First of all, one should see that one-particle representations with the tensor $B^{(i)}_a$ is gauge-equivalent to $X_a$,
\begin{eqnarray}
\label{eq:gauge tangent}
|\psi\{X_a;k_a\}\rangle = |\psi\{B^{(i)}_a;k_a\}\rangle
\end{eqnarray}
Secondly, the choice of gauge was chosen in such way that,
$$\forall k_1,k_2,a  \sum_{i,j} \overline{A^{(i)}}_{j,k_1}\otimes B^{(i)}_{j,k_2;a} = \sum_{i;j}\overline{B^{(i)}}_{j,k_1;a}\otimes A^{(i)}_{j,k_2} = 0 $$
It is necessary for us to compute the overlap between these new basis states $|\psi\{B^{(i)}_\alpha,k_\alpha\}_{\alpha=1}^m\rangle$,
\begin{eqnarray*}
&|\psi\{B^{(i)}_\alpha,k_\alpha\}_{\alpha=1}^m\rangle =\sum_{n_1<\ldots<n_m} e^{i\vec{k}.\vec{n}}|\phi\{B^{(i)}_\alpha,n_\alpha\}\rangle \nonumber\\ 
&|\phi\{B^{(i)}_\alpha,n_\alpha\}_{\alpha=1}^m\rangle=\sum_{\vec{i}}\operatorname{Tr}\left(A^{i_1}\dots A^{i_{j-1}} B^{i_{n_1}}_1A^{i_{j+1}}\right. 
\left.\dots  B^{n_{j_m}}_m A^{i_{j+m}}\dots A^{i_N}\right)|\vec{i}\rangle
\end{eqnarray*}
With some algebra, or by using the gauge equivalence (\ref{eq:gauge tangent}), we derive for $k = k_a = k_b$,
\begin{eqnarray*}
\sum_i \operatorname{Tr}\left(B^{(i)\dagger}_aB^{(i)}_b\right) = \operatorname{Tr}\left(X_a^\dagger\mathcal{F}_k[ X_b]\right)
\end{eqnarray*}
Using these simple properties of the tensor $B^{(i)}_a$ and particle representations $X_a$, one easily checks, 
$$\langle\psi\{B^{(i)}_,k_{a_j}\}_{j=1}^{m_1}\rangle|\psi\{B^{(i)}_{b_j},k_{b_j}\}_{j=1}^{m_2}\rangle =  \delta_{m,m_1}\delta_{m,m_2}N^{m}\prod_j^{m}\left[\delta_{k_j,k_{a_j}}\delta_{k_j,k_{b_j}}\operatorname{Tr}\left(X_{a_j}^\dagger\mathcal{F}_{k_j}[ X_{b_j}]\right)\right]+O(N^{-1})$$ 
It follows that such basis states with different particle numbers are orthogonal,

$$\frac{\langle\psi\{B^{(i)}_,k_{a_j}\}_{j=1}^{m_1}\rangle|\psi\{B^{(i)}_{b_j},k_{b_j}\}_{j=1}^{m_2}\rangle}{\sqrt{\langle\psi\{B^{(i)}_,k_{a_j}\}_{j=1}^{m_1}\rangle|\psi\{B^{(i)}_{b_j},k_{b_j}\}_{j=1}^{m_1}\rangle\langle\psi\{B^{(i)}_,k_{a_j}\}_{j=1}^{m_2}\rangle|\psi\{B^{(i)}_{b_j},k_{b_j}\}_{j=1}^{m_2}\rangle}}= O\left( \frac{N^{\min\{m_1,m_2\}}}{N^{\frac{m_1+m_2}{2}}} \right)$$
Finally, this allows us the try and find some orthonormal basis in each subspace.
 We fix the particles $X_{a_1},\ldots,X_{a_m}$ with respective momentum $k_1,\ldots,k_m$ and superimpose the basis states $\langle\psi\{B^{(i)}_{a_{P(j)}},k_{a_{P(j)}}\}_{j=1}^{m_1}|$ consisting of the permutations of the particles.
\begin{eqnarray*}
|\psi\{X_{a_1},\ldots, X_{a_m};k_1,\ldots,k_m\}\rangle \\ = \sum_{P\in \mathcal{S}_n}\sum_{n_1 <\ldots< n_m} e^{i(k_{P(1)}n_1+\ldots+k_{P(m)} n_m)} |\phi[B_{P(a_1)},n_{1};\ldots; B_{P(a_m)},n_{a_m}]\rangle
\end{eqnarray*}

We verify that this choice yields the orthogonality,
\begin{eqnarray*}
\langle\psi\{X_{a_1},\ldots, X_{a_m};k_1,\ldots,k_m\}|\psi\{X_{b_1},\ldots, X_{b_m};q_1,\ldots,q_m\}\rangle \\
\leq \frac{1}{N^m} \prod_j \delta_{a_j,b_j}\Big| \sum_{P\in S_m}\sum_{n_{P(1)}<\ldots<n_{P(m)}}e^{i\left((q_1-k_1)n_{P(1)}+\ldots+(q_m-k_m)n_{P(m)}\right)}\Big|\\
\leq \prod_j \delta_{a_j,b_j} \delta{q_j,p_j}+O\left(\frac{m!}{N^m}\right)
\end{eqnarray*}
The sought orthonormal basis has thus been found.

One, now, needs to know the action of the Hamiltonian onto the $m$-particle states. Lemma's (\ref{Lemma: one-particle}) and (\ref{Lemma: Asymptotic Regime}) can be applied. 

The theorem is proven once we manage to show that the effective Hamiltonian of each m-particle subspace decouples,
$$H^{(m)}\{k_1,\ldots,k_m\} = \sum_j H^{(1)}\{k_j\}$$ 
Our choice of orthogonal basis indeed implies such result,
\begin{eqnarray*}
\langle\psi\{X_{a_1},\ldots, X_{a_m};k_1,\ldots,k_m\}|&H|\psi\{X_{b_1},\ldots, X_{b_m};q_1,\ldots,q_m\}\rangle\\
 &= \sum_{j_1,j_2} \prod_{j\not\in\{j_1,j_2\}} \delta_{a_j,b_j} \delta{q_j,p_j} \langle\psi\{X_{a_{j_1}};k_{j_1}\}|H|\psi\{X_{b_{j_2}};q_{j_2}\}\rangle 
\end{eqnarray*} 
With this the theorem has been shown.

\end{proof}
We showed the symmetrized states could be considered as a type of eigenstates of the Hamiltonian. Unfortunately, this is only a small part of the spectrum. From our results in the previous lemma's, we know that the $n$-particle spectrum is some complex superposition of the $n$-particle basis. However, in the case of normal unital quantum channels, this superposition turns out to be extremely simple. All $n$-particle energy levels are degenerate and a trivial basis can be found by simply taking the symmetric and anti-symmetric superposition of the $n$-particle basis.
\begin{corollary}
	\label{Corollary: Exactly Solvable}
	When the transfer matrix is a normal unital quantum channel, in the long range and system size limit, the Hamiltonian (\ref{eq:parentH}) becomes exactly solvable with eigenstates

	\begin{eqnarray}
	|\psi\{X_{a_1},\ldots, X_{a_m};k_1,\ldots,k_m\}\rangle \nonumber \\
	= \sum_{P\in \mathcal{S}_n}\sum_{n_1 < \ldots< n_m} e^{i(k_{P(1)}n_1+\ldots+k_{P(m)} n_m)} A(P)(X_{a_1},\ldots,X_{a_m})|\phi[B_{P(a_1)},n_{1};\ldots; B_{P(a_m)},n_{a_m}]\rangle \nonumber
	\end{eqnarray}
	with $A(T_{\alpha,\beta}P)=\pm A(P) $.
\end{corollary}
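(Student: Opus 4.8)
The plan is to specialize the free-particle construction of Theorem (\ref{Theorem: free Fermions}) to the normal unital case, where the spectral data collapse enough to liberate the statistics. First I would record the simplifications that normality and unitality force on the setup. Since $\Gamma\Gamma^*=\Gamma^*\Gamma$ and $\Gamma^*[\mathbb{1}]=\mathbb{1}$, the identity is a common eigenvector of $\Gamma$ and $\Gamma^*$, so the trace-preserving fixed point is $\rho=\mathbb{1}/D$ and the one-particle map of Lemma (\ref{Lemma: one-particle}) reduces to the bare Fourier transform $\mathcal{T}_k[\Gamma]$ of (\ref{eq:FourierTransfer}). The spectral theorem then supplies a \emph{momentum-independent} orthonormal eigenbasis $\{X_a\}$ with $\Gamma[X_a]=\lambda_{X_a}X_a$ and $\operatorname{Tr}(X_a^\dagger X_b)=\delta_{ab}$; unlike the generic situation, $k$ is no longer a label of the species, so there are at most $D^2-1$ distinct particle types (together with their conjugates), each carrying the exact one-particle energy $E[X_a,k]=2L-2\operatorname{Re}\frac{1}{1-e^{i[k-\phi_{X_a}]}|\lambda_{X_a}|}$.

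Next I would transport the additive energy structure from the general theorem. By Lemma (\ref{Lemma: Asymptotic Regime}) and Theorem (\ref{Theorem: free Fermions}), in the long-range low-density limit the $m$-particle effective Hamiltonian decouples as $H^{(m)}\{k_1,\dots,k_m\}=\sum_j H^{(1)}\{k_j\}$, so any reassignment of the fixed momenta $\{k_1,\dots,k_m\}$ to the fixed species $\{X_{a_1},\dots,X_{a_m}\}$ yields the same total energy $\sum_j E[X_{a_j},k_j]$. The $m!$ basis states indexed by $P\in\mathcal{S}_m$ therefore span a single degenerate eigenspace, and the overlap estimate from the proof of Theorem (\ref{Theorem: free Fermions}) shows these permuted states are asymptotically orthogonal up to $O(m!/N^m)$, so each resulting superposition is a genuinely normalizable eigenvector in the limit.

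The crux is to show that the two-body scattering is genuinely free, so that both permutation statistics survive. Here I would argue that the contact terms produced when two particles sit within range $L$ of each other --- the fusion term (\ref{eq: 2-part LINE3}) and the simultaneous-action term $\mathcal{C}^{(2)}$ of Lemma (\ref{Lemma: Asymptotic Regime}) --- are controlled by the structure tensor $f^c_{ab}$ of $X_aX_b=\sum_c f^c_{ab}X_c$ and, in the stable long-range limit, contribute nothing off-diagonal (these are exactly the corrections $\Delta_a$ shown to vanish when the spectrum stays inside the $1-O(1/L)$ circle). With no residual interaction, the symmetric group $\mathcal{S}_m$ acts as a symmetry commuting with $H^{(m)}$, so every one-dimensional irreducible sector is an eigenspace; choosing the trivial representation reproduces the bosonic states of Theorem (\ref{Theorem: free Fermions}), while the sign representation gives the new family, i.e. precisely $A(T_{\alpha,\beta}P)=\pm A(P)$.

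The main obstacle I anticipate is this third step: rigorously certifying that the fusion and double-action contact terms leave no momentum-dependent coupling between permutation sectors. Any surviving $k$-dependent piece would make the two-body $S$-matrix nontrivial, reinstate the Yang--Baxter constraint (\ref{eq:Yang-Baxter}), and single out the symmetric sector as in the general theorem --- so the entire freedom to antisymmetrize hinges on showing, via the orthonormality of the fixed eigenbasis, that for normal unital channels this coupling is exactly diagonal in the limit.
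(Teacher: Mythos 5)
Your proposal follows essentially the same route as the paper: specialize Theorem (\ref{Theorem: free Fermions}) to normal unital channels, where $\rho=\mathbb{1}/D$, the effective one-particle Hamiltonian is exactly $\mathcal{T}_k[\Gamma]$, the species $X_a$ lose their momentum dependence (leaving at most $D^2-1$ particle types), the $m!$ permuted plane waves are degenerate, and the vanishing of the fusion/contact corrections (the $\Delta_a$-type terms controlled by the structure tensor $f^c_{ab}$) lets both the trivial and sign sectors of $\mathcal{S}_m$ serve as eigenspaces. In fact the paper gives the corollary no separate proof at all --- it rests on the one-line remark that all $n$-particle levels are degenerate plus an appeal to ``parity symmetry'' --- so the obstacle you honestly flag at the end (rigorously certifying that no momentum-dependent coupling between permutation sectors survives the limit) is precisely the step the paper leaves implicit, and your sketch is, if anything, more explicit about where the real work lies.
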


\subsection{Normal unital trace-preserving Completely Positive Operators}
In general, we need not only information about the spectral properties of the transfer matrix, but also the algebraic relation between the eigenvectors.

It turns out that normal unital trace preserving completely positive operators are an interesting class to study, as a lot of the computations, we had to endure for proving Theorem (\ref{Theorem: free Fermions}), simplifies.

$$\rho=\frac{\mathbb{1}}{\operatorname{Tr}(\mathbb{1})},~~\Gamma^*[\mathbb{1}]=\mathbb{1},~~\Gamma[\mathbb{1}]=\mathbb{1} $$

It is important to stress that not any spectrum is allowed. Some, sometimes heavy, restrictions are imposed due to the complete positiveness.
For example, if we assume that the eigenvectors have a particular group structure
\begin{eqnarray}
\label{eq:unitspec}
X_a=U_g,~~U_g=\oplus_j e^{\frac{2\pi ijg}{D}}
\end{eqnarray}
This example is based on the lemma that spectral vectors around unit circle are of this form (\ref{eq:unitspec}).

Still, the following restrictions can be shown for $\Gamma[U_g]=|\lambda_g| e^{\frac{2\pi i \kappa_g}{D}} U_g$,
\begin{eqnarray}
\label{eq:normalSpec}
\forall \alpha\in\{1,\ldots,2D\},~~ \sum_g |\lambda_g| \cos\left(\frac{2\pi \left[\kappa_g-g\alpha\right]}{D}\right) \geq 0
\end{eqnarray}

Writing, 
$$\Gamma[.]=\sum_g \lambda_g U_g \operatorname{Tr}\left(U_g^\dagger[.]\right)$$
Equation (\ref{eq:normalSpec}) follows for the necessary and sufficient for complete positiveness \cite{chapter 1: Choi CP},
$$C[\Gamma]=\sum_{\alpha,\beta}|\alpha\rangle\langle \beta|\otimes \Gamma[|\alpha\rangle\langle \beta|]\geq 0 $$
This simple example illustrates that spectral properties of completely positives are not just arbitrary. There are physics restrictions. It is amazing that nonetheless, we can connect these with the spectrum of Hamiltonians, which are pretty much arbitrary.

As presented at the end of the work, the appearance of bound states, is speculated from the breaking down of the convergence towards the one-particle subspace.
\begin{eqnarray}
\label{eq:error1Part}
\lim_{L\to\infty}\epsilon_1(X_a,k)=\lim_{L\to\infty}\frac{\|H|\psi\{X_a,k\}\rangle-2(L-\sigma_{k,a})|\psi\{X_a,k\}\rangle\|_2^2 }{\langle\psi\{\overline{X_a},k\}|\psi\{X_a,k\}\rangle}>0\\
\sigma_{k,a}=\operatorname{Re}\frac{1}{1-e^{i\left[k-\phi_{X_a}\right]}|\lambda_{X_a}|}
\end{eqnarray}

The following conclusions can be taken.
If for all momenta $k$, $\epsilon_1(X_a,k)\to 0$, then  $|\psi\{ X_a,k\}\rangle$ are eigenstates with energy $\operatorname{Re}1/(1-e^{i[k-\phi_{X_a}]}|\lambda_{X_a}|)$.

In the case of bound states lying in the $2-$ and $1-$ particle, i.e. $|\psi\{X_a,k\}\rangle +\sum_q c_q |\psi\{X_\alpha,k-q;X_\beta^\dagger,q\}$, we need a convergence of the 3-particle errors,
$$\epsilon(X_\alpha,q),\epsilon(X_\beta,q)\to 0 $$

\newpage
	\section*{References}

\end{document}